\documentclass[runningheads]{llncs}

\usepackage[mobile]{eccv}

\usepackage{eccvabbrv}

\usepackage{graphicx}
\usepackage{booktabs}
\usepackage{multirow}
\usepackage{threeparttable}
\usepackage{pifont}            
\usepackage[normalem]{ulem}    
\usepackage{dsfont}            
\usepackage{epigraph}
\usepackage{algorithm}
\usepackage{algorithmic}
\usepackage{listings}

\usepackage[accsupp]{axessibility}  

\usepackage{hyperref}
\hypersetup{
        colorlinks,
        linkcolor={Maroon},
        citecolor={MidnightBlue},
        urlcolor={teal!85!black}
}
\newcommand{\appref}[1]{Appendix~\ref{#1}}

\usepackage{orcidlink}

\DeclareMathOperator*{\argmax}{arg\,max}
\spnewtheorem{assumption}{Assumption}{\bfseries}{\itshape}

\newcommand{\cmark}{\textcolor{green!60!black}{\ding{51}}}  
\newcommand{\xmark}{\textcolor{red!70!black}{\ding{55}}}    
\newcommand{\red}[1]{#1}
\newcommand{\rev}[1]{#1}
\newcommand\samethanks[1][\value{footnote}]{\footnotemark[#1]}

\begin{document}

\title{\textit{Whence the Voice?} \\Self-supervised Dual-source
  Audio-Visual Localisation via Selective Convergence}

\titlerunning{Whence the Voice?}

\author{Han Hu\thanks{Equal contribution.}\orcidlink{0009-0001-3731-0067} \and
        Dongheng Lin\samethanks\orcidlink{0009-0004-5834-9077} \and
        Yuqi Hou\orcidlink{0009-0002-3086-8698} \and
        Haotian Li\orcidlink{0009-0003-6359-2015} \and
        Hyung Jin Chang\orcidlink{0000-0001-7495-9677} \and
        Jianbo Jiao\orcidlink{0000-0003-0833-5115}}

\authorrunning{H.~Hu et al.}

\institute{The \href{https://mix.jianbojiao.com/}{MIx Group}, School of Computer Science, University of Birmingham, UK\\
  Project page: \url{https://happy-new-bears.github.io/scav-project-page/}}

\maketitle
\setcounter{footnote}{0}  

\begin{abstract}
Localising multiple sound sources in visual scenes remains a fundamental challenge in multimodal perception due to an inherent circular dependency: separating mixed audio requires knowing source locations, while identifying sound-producing regions requires separated audio signals.
In this paper, we \red{focus on the dual-source setting and} discover a {selective convergence} in self-supervised audio-visual learning: when presented with multiple sound sources, contrastive models naturally converge to the most salient audio-visual correspondence rather than attempting to represent all sources equally.
This emergent phenomenon, analogous to human selective auditory attention, enables us to break the above circular dependency through a progressive two-stage framework: first, leveraging selective convergence to identify dominant sources, and then exploiting these learned priors to uncover remaining sources.
Our self-supervised approach achieves \red{the best performance among self-supervised methods} on \rev{dual-source} benchmarks without requiring any manual annotations, \red{and even surpasses} some weakly-supervised approaches \red{on certain metrics}.
Furthermore, we identify a fundamental evaluation inconsistency in existing benchmarks: comparing continuous localisation heatmaps against bounding-box annotations creates systematic biases, particularly for non-axis-aligned objects where the bounding box includes substantial background regions.
To address this, we introduce pixel-level segmentation masks to the existing benchmark, enabling spatially-aligned evaluation.
Together, these results suggest that embracing rather than suppressing selectivity offers a scalable, annotation-free route to multi-source localisation.

\keywords{Audio-Visual Localisation \and Selective \and Self-supervised}

\epigraph{\makebox[1\textwidth][r]{``They rose expectant: eye and ear waited...''}}{\textit{Charlotte Brontë, Jane Eyre}}

\end{abstract}

\section{Introduction}
\label{sec:intro}
\begin{figure}[t]
      \centering
      \includegraphics[width=.99\linewidth]{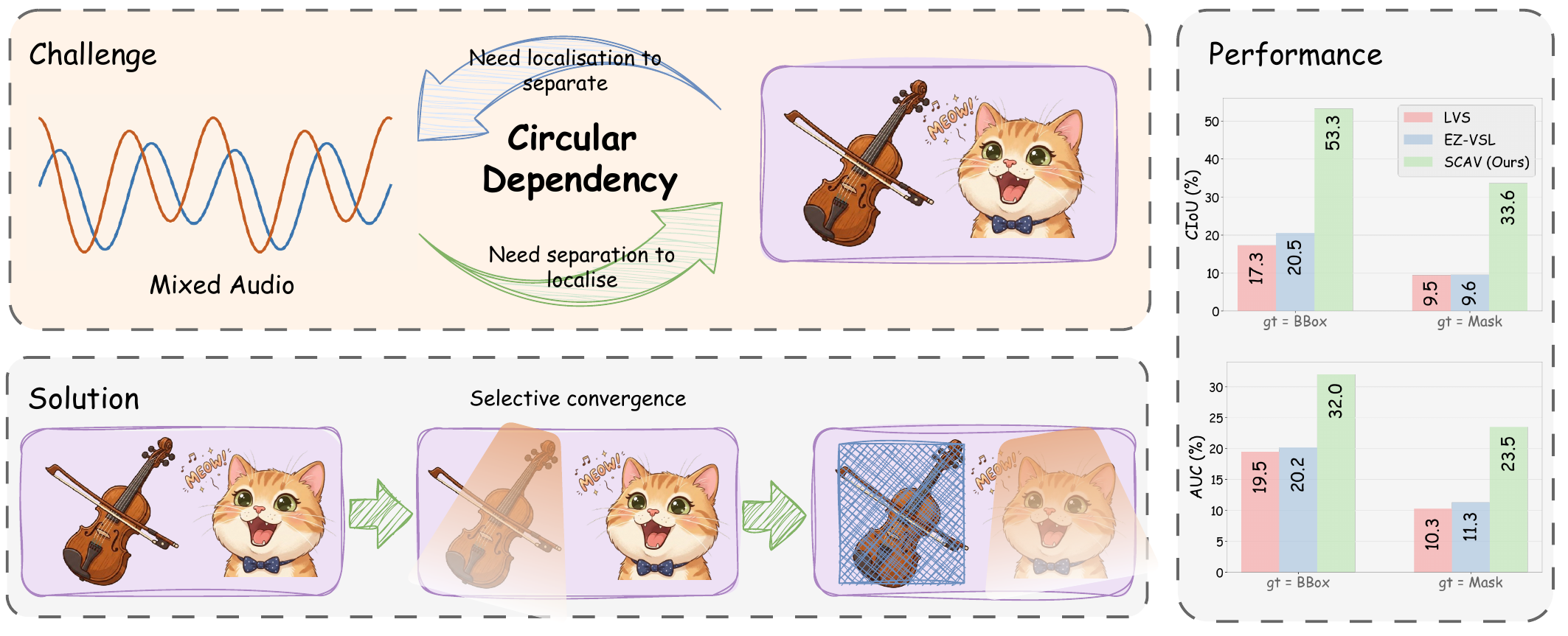}
      \caption{\textbf{Top-left}: The fundamental paradox where visual localisation requires separated audio, while audio separation needs spatial visual information, creating an intractable circular dependency. \textbf{Bottom-left}: Our SCAV framework breaks this dependency by leveraging selective convergence. The model naturally focuses on one dominant source at first, then progressively reveals the other source. \textbf{Right}: Performance of the proposed SCAV surpasses other \red{self-supervised} methods across various metrics.}
      \label{fig:teaser}
\end{figure}
Sound rarely occurs in isolation. In natural environments, multiple sources produce sounds simultaneously, creating rich acoustic scenes that humans navigate effortlessly by matching sounds to their visual origins. Such audio-visual correspondence forms the basis of how we understand and interact with the world.
For machines, the challenge is concrete: \textit{given a mixed audio signal and a visual scene with corresponding sounding objects, how can we determine which sounds come from which objects, when neither modality provides clear separation cues?}

Localising multiple simultaneous sound sources requires mixed-signal disentanglement in two different domains.
In the audio domain, sounds mix through direct linear superposition;
In contrast, the visual domain combines objects through spatial arrangement: different sound sources occupy distinct regions in an image.
This domain asymmetry creates a challenging chicken-and-egg paradox: to decompose the additively mixed audio, we need spatial guidance from the visual domain to indicate where each source is located; yet to identify which spatial regions produce sound, we need separated audio signals to establish clear correspondences.
This circular dependency makes the joint localisation task a severely ill-posed inverse problem: multiple valid decompositions exist for both the mixed audio and the visual scene, and without additional constraints, standard optimisation cannot determine which solution corresponds to the true source configuration.

While this circular dependency appears fundamentally intractable, human auditory perception offers an intriguing perspective.
Humans effortlessly navigate cocktail parties through selective attention~\cite{bronkhorst2015cocktail, sussman2017, brattico2017global}: they focus on the most salient speaker first and then shift to others.
This cognitive strategy suggests that perfect simultaneous separation may be neither necessary nor optimal.
When we investigated whether neural networks exhibit similar behaviours in multi-source scenarios, extensive experiments with contrastive audio-visual learning revealed a remarkable phenomenon:
Rather than struggle to represent all sources equally, networks consistently exhibit \emph{selective convergence} and gravitate toward the most dominant audio-visual correspondence.
We formalise this insight into a two-stage progressive framework that leverages selective convergence as a core mechanism \red{for dual-source localisation}.
The first stage exploits this natural bias to identify dominant sources through contrastive learning, while the second stage uses these learned priors with targeted masking to progressively unmask subdominant sources.
This approach effectively converts the ill-posed joint problem into a sequence of well-conditioned optimisations requiring no manual annotations.

To sum up, our contributions are threefold. \textbf{First}, we identify the \emph{selective convergence} phenomenon: when learning from mixed audio, contrastive models naturally focus on the more dominant source rather than all sources equally.
\textbf{Second}, we propose a two-stage progressive framework that leverages selective convergence to break the circular dependency problem.
The first stage uses selective convergence to find dominant sources, while the second stage progressively unmasks remaining sources through targeted masking.
\textbf{Third}, our approach not only achieves promising performance improvements on existing benchmarks, but the stable performance improvement brought by our method is also further validated under a newly introduced, more rigorous evaluation protocol.

\section{Related Works}
\label{sec:related_work}

\rev{Audio-visual sounding source localisation aims to determine the spatial location of sounding objects within a video frame using the accompanying audio signal. The inherent co-occurrence between the auditory and visual modalities provides a powerful self-supervised signal for this task~\cite{Arandjelovic2017LookLL, Arandjelovic2018ObjectsTS, Owens2018AudioVisualSA, Owens2016AmbientSP, Aytar2016SoundNetLS, hu2025audio, um2025object}.}

\rev{Early efforts in audio-visual sound source localisation primarily focused on only one sound source per scene~\cite{senocak2018learning, senocak2020learning}.
Arandjelovic and Zisserman~\cite{Arandjelovic2018ObjectsTS} and Owens and Efros~\cite{Owens2018AudioVisualSA} employed mechanisms akin to Class Activation Map to measure the similarity between image regions and audio features.
More recent approaches adopted contrastive learning frameworks~\cite{oord2018representation} to improve discriminability at the region level.
Chen~\etal~\cite{Chen2021LVS} introduced a contrastive learning framework with a differentiable thresholding mechanism to automatically mine hard negative image fragments.
Mo and Morgado~\cite{Mo2022EZVSL} proposed EZ-VSL, a multiple-instance contrastive learning strategy that aligns audio and visual spaces by maximising the similarity between the audio and the most responsive region in the corresponding image.
Recent efforts have addressed practical challenges such as false negatives~\cite{sun2023learning}, off-screen sounds and background noise~\cite{liu2022visual, choi2025whats}, learning from silence~\cite{juanola2025ssl}, and feature decomposition through slot attention~\cite{kim2025jsa}.
Park~\etal~\cite{park2024clipssl} explored leveraging CLIP's pre-trained visual representations for sound source localisation, demonstrating that vision-language models can provide useful semantic priors for audio-visual correspondence.
However, these methods fail when multiple sources sound simultaneously, which is the common case in real-world scenarios.}

\rev{When multiple sound sources are present in a visual scene, separating their mixed audio requires knowing where each source is located, but finding these locations requires having the separated audio to establish correspondences, which creates a fundamental circular dependency.
To break this dependency, researchers have explored different strategies.
Mix-and-Localize~\cite{mix_and_localize, mo2024multiscale} proposed a self-supervised random walk on an image-sound graph to force source disentanglement through cycle consistency.
Other methods introduce explicit semantic guidance.
The self-supervised approach DSOL~\cite{dsol2020} tackles the paradox by generating a pseudo-class dictionary of visual representations via unsupervised clustering in a single-source setting.
More recently, Kim~\etal~\cite{kim2024learning} introduced an iterative curriculum learning framework that progressively discovers multiple sound sources through object-aware contrastive learning and negative exclusion, though it requires multiple hand-tuned thresholds.
The weakly-supervised method AVGN~\cite{mo2023audiovisual} directly uses true video-level class labels to guide learnable class tokens for explicit source grouping and disentanglement, while Dual Mean-Teacher~\cite{guo2023dual} employs semi-supervised learning with dual teacher-student structures to generate high-quality pseudo-labels.
More recent work introduces external modalities: T-VSL~\cite{mahmud2024tvsl} leverages text representations from AudioCLIP's tri-modal embedding space to guide category-specific feature extraction, while OA-SSL~\cite{um2025object} employs Multimodal Large Language Models to generate contextual descriptions that distinguish actively sounding objects from silent ones.
In contrast to methods that engineer complex mechanisms, rely on weakly-supervised labels, or introduce external text modalities, our work leverages the emergent property of contrastive learning to progressively break the circular dependency in a fully self-supervised manner \red{with} only audio-visual data.}

\section{Method}
\label{sec:method}
\subsection{Preliminaries}
\label{sec:preliminaries}
Given a video containing two simultaneously sounding sources, our goal is to localise both sound sources in the visual scene without manual annotations.
Let $\mathcal{X} = (I, \mathbf{a}_{\text{mix}})$ denote an audio-visual pair, where the visual frame $I \in \mathbb{R}^{H \times W \times 3}$ captures a scene with two sound-producing objects, and the mixed audio spectrogram $\mathbf{A}_{\text{mix}} \in \mathbb{R}^{T \times F}$ contains $T$ time steps and $F$ frequency bins.
Our objective is to produce two continuous-valued spatial localisation maps $\mathbf{M}_1, \mathbf{M}_2 \in [0,1]^{H \times W}$, where each map indicates the location of the corresponding sound source.

Localising sound sources requires computing audio-visual correspondences between $\mathbf{a}_i$ and $I$ to obtain $\mathbf{M}_i$, yet we only observe the mixed audio $\mathbf{a}_{\text{mix}} = \sum_i \mathbf{a}_i$ where individual source contributions $\mathbf{a}_i$ are unknown.
One might attempt to first separate $\mathbf{a}_i$ from $\mathbf{a}_{\text{mix}}$, but separation requires knowing $\mathbf{M}_i$ to assign frequency components to spatial locations.
This creates a \emph{chicken-egg paradox}: localisation needs separated audio, while separation needs spatial maps.
Without additional constraints, this circular dependency makes the problem ill-posed.

While this circular dependency appears intractable, we observe that contrastive learning naturally exhibits \emph{selective convergence}: rather than representing all sources equally, models gravitate toward \red{a single, most salient} audio-visual correspondence.
We propose the Selective Convergence Model for Audio-Visual localisation (SCAV), a two-stage framework that leverages this phenomenon.
\red{Through contrastive learning, the first stage identifies the source that the model is biased toward, which we term the \emph{dominant source} $\mathbf{M}_{\text{dom}}$; the other is the \emph{subdominant source} $\mathbf{M}_{\text{sub}}$. This dominance is determined by the model's bias, instead of any predefined property of the audio or visual input. Importantly, our framework relies only on the fact that selective convergence isolates one source as the spatial prior; it is agnostic to which of the two sources becomes dominant.}
The second stage uses $\mathbf{M}_{\text{dom}}$  as the spatial prior to progressively reveal the subdominant source $\mathbf{M}_{\text{sub}}$.
This approach effectively converts the ill-posed joint problem into a sequence of well-conditioned optimisations in a fully self-supervised manner.

\begin{figure*}[t]
      \centering
      \includegraphics[width=\textwidth]{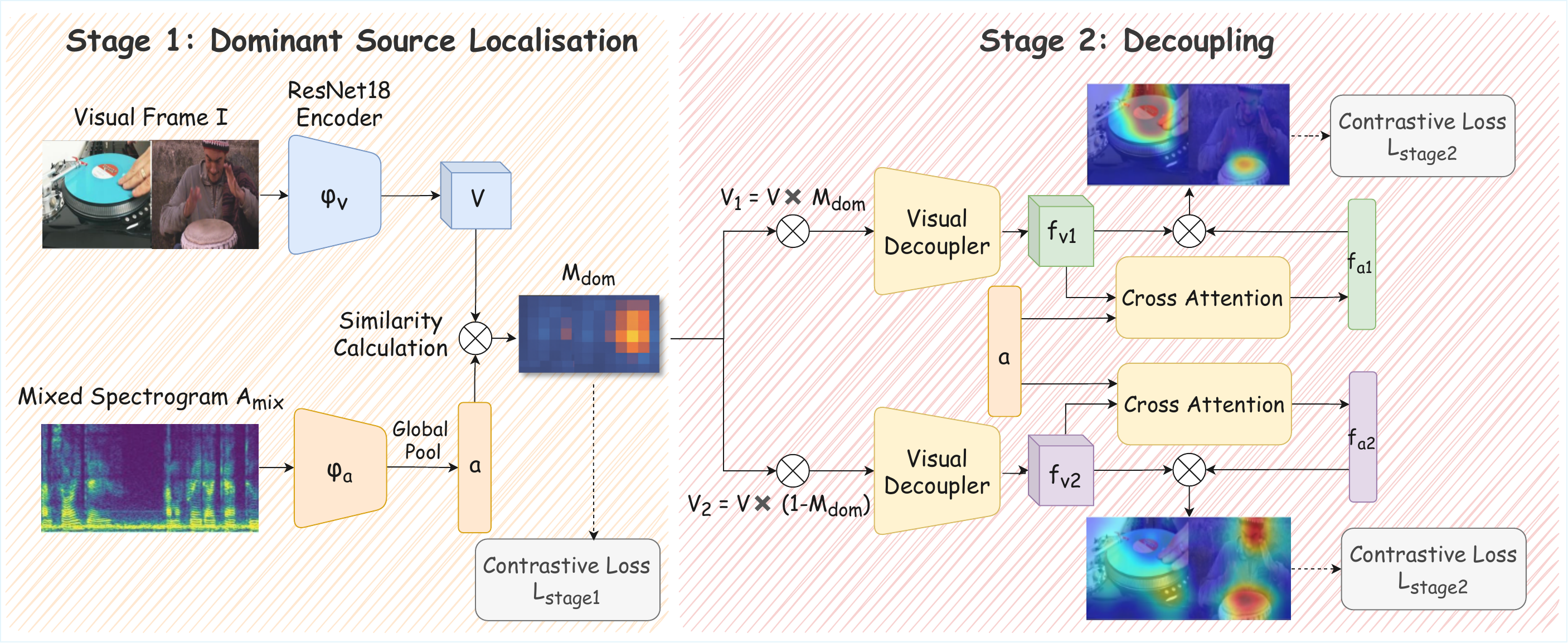}
      \caption{\textbf{Architecture of our Selective Convergence Audio-Visual localisation (SCAV) framework.} \textbf{Stage 1 (left)}:
      We extract visual features $V$ using a frozen ResNet-18 encoder. The visual features are aligned with audio features $\mathbf{a}$ via contrastive learning to generate a coarse localisation heatmap $\mathbf{M}_{\text{dom}}$ that identifies the more dominant sounding source.
      \textbf{Stage 2 (right)}: Using the Stage 1 heatmap as a spatial prior, visual and audio decouplers progressively separate the mixed features into source-specific representations that yield refined localisation maps for both sources.
      }
      \label{fig:method}
\end{figure*}

\subsection{Dominant Source Localisation via Selective Convergence}
\label{subsec:stage1}
We first extract features from different modalities:
\begin{equation}
V = \Phi_v(I) \in \mathbb{R}^{D \times h \times w} , \quad
\mathbf{a} = \text{GAP}(\Phi_a(\mathbf{A}_{\text{mix}})) \in \mathbb{R}^{D}
\end{equation}
where $h = H/32$, $w = W/32$,   and GAP denotes global average pooling.
During training, we initialise the visual encoder $\Phi_v$ with ImageNet pretrained weights and keep it frozen to leverage robust visual representations, while the audio encoder $\Phi_a$ is trained from scratch to learn audio-visual correspondence

For audio features $\mathbf{a}_i$ from sample $i$ and visual features $V_j$ from sample $j$, we calculate the dot product between $\mathbf{a}_i$ and the feature vector at each spatial patch location $(h, w)$ of $V_j$:
\begin{equation}
\mathcal{S}_{i \rightarrow j} = \langle V_j, \mathbf{a}_i \rangle \in \mathbb{R}^{h \times w}
\end{equation}
where each element $\mathcal{S}_{i \rightarrow j}^{(h,w)}$ represents the similarity score between the audio and the visual features at position $(h,w)$.

To train the encoders in a self-supervised manner, we adopt the contrastive learning framework from~\cite{Chen2021LVS} with differentiable thresholding to automatically identify positive and negative spatial regions.
The soft masks $\hat{m}_p$ and $\hat{m}_n$ are computed via differentiable thresholding:
\begin{align}
\hat{m}_p &= \sigma((\mathcal{S}_{i \rightarrow i} - \epsilon_p)/\tau), \quad
\hat{m}_n = 1 - \sigma((\mathcal{S}_{i \rightarrow i} - \epsilon_n)/\tau) \label{eq:masks}
\end{align}
with $\epsilon_p = 0.65$, $\epsilon_n = 0.4$, temperature $\tau = 0.03$ and $\sigma$ means  Sigmoid function.

These masks partition spatial locations into positive pairs (regions with strong audio-visual correspondence, $\hat{m}_p$) and negative pairs (regions with weak correspondence, $\hat{m}_n$).
The positive and negative scores are then computed as:
\begin{equation}
  P_i = \frac{1}{|\hat{m}_p|} \langle \hat{m}_p, \mathcal{S}_{i \rightarrow i} \rangle,
  \label{eq:positive_score}
  \end{equation}
  \begin{equation}
  N_i = \frac{1}{|\hat{m}_n|} \langle \hat{m}_n, \mathcal{S}_{i \rightarrow i} \rangle + \frac{1}{hw} \sum_{j \neq i} \langle \mathbf{1}, \mathcal{S}_{i \rightarrow j} \rangle
  \label{eq:negative_score}
\end{equation}
where $P_i$ aggregates similarity over positive regions (high audio-visual correspondence) and $N_i$ aggregates over negative regions (low correspondence within sample $i$, plus cross-sample negatives from $j \neq i$).

The contrastive loss is:
\begin{equation}
\mathcal{L}_{\text{stage1}} = -\frac{1}{B} \sum_{i=1}^{B} \log \frac{\exp(P_i)}{\exp(P_i) + \exp(N_i)}.
\label{eq:stage1_loss}
\end{equation}

\rev{Training with such contrastive loss on mixed audio leads the model to exhibit \emph{selective convergence}: the learned similarity map $\mathcal{S}_i = \langle V_i, \mathbf{a}_i \rangle \in \mathbb{R}^{h \times w}$ naturally concentrates on the dominant source with the most obvious audio-visual correspondence.}
\rev{This arises from the simplicity bias of gradient-based optimisation~\cite{xue2023icml}.}

\rev{To extract a clean spatial prior $\mathbf{M}_{\text{dom}}$ from this similarity map, we further apply a simple post-processing step (detailed in \cref{alg:dominant_extraction}).}

\subsection{Progressive Multi-source Localisation with Spatial Priors}
\label{subsec:spatial_prior}
Having identified the dominant source $\mathbf{M}_{\text{dom}}$ in Stage 1, we now leverage it as a spatial prior
to break the circular dependency.
The key insight is simple: with $\mathbf{M}_{\text{dom}}$ telling us where the dominant source is located, we can (1) partition visual features into source-specific regions, (2) use these partitioned visual features to guide audio separation via cross-attention, and (3) localise each source independently with the decoupled audio-visual pairs.

\paragraph{\textbf{Feature extraction and projection.}}
We reuse the encoders from the first stage to extract visual features $V \in \mathbb{R}^{512 \times h \times w}$ and audio features $\mathbf{a} \in \mathbb{R}^{512}$.
To enable effective decoupling, we project these features into a unified embedding space and add 2D positional
encoding to preserve spatial structure:
\begin{equation}
  \tilde{V} = \text{Proj}_v(V) + \text{PE}_{2D} \in \mathbb{R}^{D \times h \times w},
\end{equation}
\begin{equation}
  \tilde{\mathbf{a}} = \text{Proj}_a(\mathbf{a}) \in \mathbb{R}^{D}
\end{equation}
where $\text{Proj}_v$ and $\text{Proj}_a$ are learnable linear  projection layers, and $\text{PE}_{2D}$ denotes 2D sinusoidal positional encoding.

\paragraph{\textbf{Visual feature decoupling.}}
With the spatial prior $\mathbf{M}_{\text{dom}}$, we partition the visual features into two complementary regions:
\begin{align}
V_1 &= \tilde{V} \odot \mathbf{M}_{\text{dom}}, \quad
V_2 = \tilde{V} \odot (1 - \mathbf{M}_{\text{dom}})
\end{align}
where $V_1$ captures the dominant source region and $V_2$ captures the complementary region (other possible source region and the background).

A visual decoupler $\Phi_v^{\text{dec}}$ then refines these coarsely separated features to produce more discriminative source-specific visual features:
\begin{equation}
f_{v1}, f_{v2} = \Phi_v^{\text{dec}}(V_1, V_2), \quad f_{vi} \in \mathbb{R}^{D \times h \times w}.
\end{equation}

\paragraph{\textbf{Audio feature decoupling.}}
\label{subsec:stage2}
\rev{With decoupled visual features $(f_{v1}, f_{v2})$ ready, we can now separate the mixed audio feature $\tilde{\mathbf{a}}$ into source-specific representations via the audio decoupler $\Phi_a^{\text{dec}}$, shown as the cross-attention branch in Stage 2 (right) of \cref{fig:method}.}
\rev{The key idea is to use each visual feature as a spatial guide: the audio decoupler $\Phi_a^{\text{dec}}$ employs
cross-attention where the mixed audio serves as the query, and each visual feature provides keys and values:}
\begin{equation}
f_{a1}, f_{a2} = \Phi_a^{\text{dec}}(\tilde{\mathbf{a}}, f_{v1}, f_{v2}) \in \mathbb{R}^{D}.
\end{equation}

\rev{This allows each audio feature $f_{ai}$ to attend to its corresponding visual region, effectively extracting the audio component associated with that spatial location.}

\paragraph{\textbf{Source-specific contrastive learning.}}
After having decoupled both modalities into $(f_{v1}, f_{v2})$ and $(f_{a1}, f_{a2})$, we train each source pair independently using contrastive learning.
For each source $i \in \{1, 2\}$, we compute the similarity map $\mathcal{S}_i = \langle f_{vi}, f_{ai}\rangle$ and optimie:
\begin{equation}
\mathcal{L}_i = -\log \frac{\exp(P_i)}{\exp(P_i) + \exp(N'_i)}
\end{equation}
where $P_i$ and $N'_i$ are positive and negative scores computed via differentiable thresholding
(\cref{eq:masks}).
\rev{Crucially, unlike Stage 1, we exclude cross-sample negative terms in $N'_i$ because the decoupling task focuses on within-sample source assignment rather than cross-sample discrimination.}
The total loss is:
\begin{equation}
\mathcal{L}_{\text{Stage2}} = \mathcal{L}_1 + \mathcal{L}_2.
\end{equation}

\paragraph{\textbf{Inference stage.}}
\rev{At inference, the trained model produces decoupled audio-visual pairs $(f_{v1}, f_{a1})$ and $(f_{v2}, f_{a2})$.
The final localisation maps are obtained by computing similarity and upsampling:}
\begin{equation}
\mathbf{M}_k = \text{Upsample}(\langle f_{vk}, f_{ak} \rangle), \quad k \in \{1, 2\}
\end{equation}
where $\langle \cdot, \cdot \rangle$ denotes the inner product.

\section{Experiments}
\label{sec:experiments}
\subsection{Experimental Setup}
\label{subsec:setup}
\paragraph{\textbf{Implementation details.}}
For visual inputs $I$, each image is resized to $224 \times 224$ resolution.
For audio inputs, signals are sampled at 22.05 kHz and converted to log-spectrograms $\mathbf{A}_{\text{mix}}$ using Short-Time Fourier Transform with 50ms windows and 25ms hop size for 3-second clips.
The visual encoder $\Phi_v$  employs separate ResNet18~\cite{he2016deep} networks initialised with ImageNet pre-trained weights.
The audio encoder $\Phi_a$ uses another ResNet18 architecture with the first convolutional layer adapted to a single channel.
The visual decoupler $\Phi_v^{\text{dec}}$ employs a 4-layer Transformer encoder with 8 attention heads, while the audio decoupler $\Phi_a^{\text{dec}}$ uses cross-attention mechanisms to produce separated audio representations.

\paragraph{\textbf{Training protocol.}}
We employ a two-stage progressive training strategy.
In the first stage, we train  encoders ($\Phi_v$, $\Phi_a$) using  $\mathcal{L}_{\text{stage1}}$ to learn dominant source localisation through selective convergence.
In the second stage, we train the full architecture using  $\mathcal{L}_{\text{stage2}}$, with spatial priors $\mathbf{M}_{\text{dom}}$ provided by the trained first-stage model.
We train our model using the Adam optimizer~\cite{kingma2014adam} with an initial learning rate of $10^{-4}$ and a batch size of 256.

\rev{Note that the two-stage design refers to two \textit{training} stages only; at inference time, the trained model executes a single forward pass without interruption, achieving real-time performance at 43.2 FPS on a single NVIDIA A100 GPU.}

\subsection{Evaluation Metrics}
\label{subsec:metrics}

Following prior works~\cite{mix_and_localize,dsol2020}, we evaluate \rev{dual-source} localisation using CIoU@X, AUC, and CAP.
\rev{Note that for synthetic dual-source benchmarks (\eg VGGSound-Duet), we evaluate each predicted heatmap over the \emph{full} concatenated image domain against zero-padded ground truth masks without any cropping on the predicted heatmap. This ensures that cross-source activations are properly penalised; see \appref{subsec:eval_protocol} for more details.}

(1) CAP (Class-Aware Precision) measures the percentage of correctly localised sources at the class level:
\begin{equation}
CAP = \frac{\sum_{k=1}^{K} \delta_k \mathrm{AP}_k}{\sum_{k=1}^{K} \delta_k}.
\end{equation}

(2) CIoU (Class-Aware IoU) is the intersection-over-union between predicted heatmaps and ground truth:
\begin{equation}
CIoU = \frac{\sum_{k=1}^{K} \delta_k \, IoU_k}{\sum_{k=1}^{K} \delta_k}
\end{equation}
where $\mathrm{IoU}_k$ is calculated based on the predicted sounding object area and the annotated bounding box for the $k$-th class.

(3) AUC is the area under the precision-recall curve:
\begin{equation}
\mathrm{AUC} = \int_{0}^{1} R(t)\,\mathrm{d}t, \quad \text{where} \quad R(t) = \frac{\sum_{k=1}^{K} \delta_k \cdot
   \mathds{1}[\mathrm{IoU}_k \ge t]}{\sum_{k=1}^{K} \delta_k}.
\end{equation}

\subsection{Evaluation on Existing Benchmarks}
\label{subsec:existing_benchmarks}
We first evaluate our approach on three established multi-source localisation benchmarks, \ie MUSIC-Duet, VGGSound-Instruments, and VGGSound-Duet:

\paragraph{\textbf{MUSIC-Duet.}} The dataset originally contains 149 untrimmed duet videos from the MUSIC dataset~\cite{sound_of_pixel}, the accessible set comprises 141 videos that cover 11 classes of musical instruments due to the YouTube availability issue.
The dataset provides bounding-box annotations for dual sources generated using the Faster R-CNN detector by Hu~\etal~\cite{dsol2020}.
Following Hu~\etal~\cite{dsol2020}, the first two videos per instrument category form the test set, and the rest are for training.

As shown in \cref{tab:two_bbox_datasets}, our self-supervised SCAV achieves the best localisation quality (CIoU/AUC) \red{among self-supervised methods}.
However, SCAV exhibits weaker detection performance (CAP) compared to AVGN~\cite{mo2023audiovisual}.

\paragraph{\textbf{VGGSound-Instruments.}}
Hu~\etal~\cite{mix_and_localize} filtered 37 musical instrument categories from VGGSound~\cite{chen2020vggsound} with 32k training videos.
For multi-source evaluation, Hu~\etal~\cite{mix_and_localize} annotated segmentation masks for 446 high-quality frames.
The benchmark synthesises multi-source scenarios by randomly concatenating two frames into $224\times448$ images while summing their audio waveforms.
As shown in \cref{tab:two_bbox_datasets}, the proposed SCAV method achieves the best performance \red{among self-supervised methods} on most metrics (CAP/AUC) and a competitive CIoU@0.1.
\begin{table*}[t]
\centering
\caption{Quantitative results on MUSIC-Duet (BBox) and VGGSound-Duet (BBox).}
\label{tab:two_bbox_datasets}
\resizebox{\textwidth}{!}{
\begin{tabular}{l | c | ccc | ccc}
\toprule
\multirow{2}{*}{\textbf{Method}} &
\multirow{2}{*}{\textbf{Self-Sup.}} &
\multicolumn{3}{c|}{\textbf{MUSIC-Duet (BBox)}} &
\multicolumn{3}{c}{\textbf{VGGSound-Duet (BBox)}} \\
\cmidrule(lr){3-5} \cmidrule(lr){6-8}
& & \textbf{CIoU@0.3(\%)} & \textbf{AUC(\%)} & \textbf{CAP(\%)} &
\textbf{CIoU@0.3(\%)} & \textbf{AUC(\%)} & \textbf{CAP(\%)} \\
\midrule

CoarsetoFine (ECCV'2020)~\cite{qian2020multiple}
& \xmark
& \textcolor{gray}{17.6} & \textcolor{gray}{20.6} & -
& \textcolor{gray}{14.7} & \textcolor{gray}{18.5} & - \\

AVGN (CVPR'2023)~\cite{mo2023audiovisual}
& \xmark
& \textcolor{gray}{32.5} & \textcolor{gray}{24.6} & \textcolor{gray}{50.6}
& \textcolor{gray}{26.2} & \textcolor{gray}{23.8} & \textcolor{gray}{21.9} \\

OA-SSL (CVPR'2025)~\cite{um2025object}
& \xmark
& \textcolor{gray}{45.9} & \textcolor{gray}{36.1} & \textcolor{gray}{64.1}
& \textcolor{gray}{55.2} & \textcolor{gray}{44.8} & \textcolor{gray}{45.9} \\

\midrule
Attention10k (CVPR'2018)~\cite{senocak2018learning}
& \cmark
& 21.6 & 19.6 & -
& 11.5 & 15.2 & - \\

OTS (ECCV'2018)~\cite{Arandjelovic2018ObjectsTS}
& \cmark
& 13.3 & 18.5 & 11.6
& 12.2 & 15.8 & 10.7 \\

DMC (CVPR'2019)~\cite{hu2019deep}
& \cmark
& 17.5 & 21.1 & -
& 13.8 & 17.1 & - \\

DSOL (NeurIPS'2020)~\cite{dsol2020}
& \cmark
& \underline{30.1} & \underline{22.3} & -
& \underline{22.3} & \underline{21.1} & - \\

LVS (CVPR'2021)~\cite{Chen2021LVS}
& \cmark
& 22.5 & 20.9 & -
& 17.3 & 19.5 & - \\

EZ-VSL (ECCV'2022)~\cite{Mo2022EZVSL}
& \cmark
& 24.3 & 21.3 & -
& 20.5 & 20.2 & - \\

Mix-and-Localize (CVPR'2022)~\cite{mix_and_localize}$^*$
& \cmark
& 26.5 & 21.5 & \underline{47.5}
& 21.1 & 20.5 & 16.3 \\

NoPrior (CVPR'2024)~\cite{kim2024learning}
& \cmark
& 2.5$^\dagger$ & 13.2 & 21.7
& 18.1 & 16.8 & \underline{26.8} \\

Slot-Attn (CVPR'2025)~\cite{kim2025jsa}
& \cmark
& 9.9 & 12.7 & 19.6
& 15.5 & 16.2 & 25.0 \\

\textbf{SCAV (Ours)}
& \cmark
& \textbf{47.1} & \textbf{28.6} & \textbf{47.6}
& \textbf{53.3} & \textbf{32.0} & \textbf{53.0} \\

\bottomrule
\multicolumn{8}{l}{\footnotesize $^*$Mix-and-Localise is re-run under our frame-wise evaluation for VGGSound-Duet; OA-SSL and AVGN are taken as originally reported.} \\
\multicolumn{8}{l}{\footnotesize $^\dagger$See \appref{sec:noprior_reproduction} for our reproduction details and analysis.}
\end{tabular}
}
\end{table*}

\begin{table*}[t]
\centering
\caption{Quantitative comparison on VGGSound-Instruments (Mask) and VGGSound-Duet$^{\text{Mask}}$ with segmentation-mask-based evaluation.}
\label{tab:two_mask_datasets_vertical}
\resizebox{\textwidth}{!}{
\begin{tabular}{l | l | c | cccc}
\toprule
\textbf{Dataset} & \textbf{Method} & \textbf{Self-Sup.} &
\textbf{CAP(\%)} & \textbf{CIoU@0.1(\%)} & \textbf{CIoU@0.3(\%)} & \textbf{AUC(\%)} \\
\midrule

\multirow{10}{*}{\textbf{VGGSound-Instruments}}
& CoarsetoFine (ECCV'2020)~\cite{qian2020multiple} & \xmark
& - & \textcolor{gray}{54.2} & - & \textcolor{gray}{12.9} \\
& AVGN (CVPR'2023)~\cite{mo2023audiovisual} & \xmark
& \textcolor{gray}{27.3} & \textcolor{gray}{77.5} & - & \textcolor{gray}{18.2} \\
\cmidrule(lr){2-7}
& Attention10k (CVPR'2018)~\cite{senocak2018learning} & \cmark
& - & 52.3 & - & 11.7 \\
& OTS (ECCV'2018)~\cite{Arandjelovic2018ObjectsTS} & \cmark
& \underline{23.3} & 51.2 & - & 11.2 \\
& DMC (CVPR'2019)~\cite{hu2019deep} & \cmark
& - & 53.7 & - & 12.5 \\
& DSOL (NeurIPS'2020)~\cite{dsol2020} & \cmark
& - & \underline{74.3} & - & \underline{15.9} \\
& LVS (CVPR'2021)~\cite{Chen2021LVS} & \cmark
& - & 57.3 & - & 13.3 \\
& EZ-VSL (ECCV'2022)~\cite{Mo2022EZVSL} & \cmark
& - & 60.2 & - & 14.2 \\
& Mix-and-Localize (CVPR'2022)~\cite{mix_and_localize} & \cmark
& 21.5 & 73.2 & - & 15.6 \\
& \textbf{SCAV (Ours)} & \cmark
& \textbf{32.0} & \textbf{76.0} & - & \textbf{21.1} \\

\midrule

\multirow{6}{*}{\textbf{VGGSound-Duet$^{\text{Mask}}$}}
& \textcolor{gray}{AVGN (CVPR'2023)~\cite{mo2023audiovisual}} & \xmark
& \textcolor{gray}{33.61} & \textcolor{gray}{55.43} & \textcolor{gray}{18.44} & \textcolor{gray}{17.39} \\
\cmidrule(lr){2-7}
&  LVS (CVPR'2021)~\cite{Chen2021LVS} & \cmark
& 18.00 & 31.51 & 9.53 & 10.28 \\
& EZ-VSL (ECCV'2022)~\cite{Mo2022EZVSL} & \cmark
& 16.82 & 37.97 & 9.58 & 11.35 \\
& NoPrior (CVPR'2024)~\cite{kim2024learning} & \cmark
& 20.26 & 37.02 & 5.77 & 10.25 \\
& Slot-Attn (CVPR'2025)~\cite{kim2025jsa} & \cmark
& 15.21 & 30.87 & 3.63 & 8.92 \\
& \textbf{SCAV (Ours)} & \cmark
& \textbf{39.30} & \textbf{69.42} & \textbf{33.59} & \textbf{23.52} \\

\bottomrule
\end{tabular}
}
\end{table*}

\paragraph{\textbf{VGGSound-Duet.}}
To assess scalability, we evaluate on VGGSound-Duet~\cite{mo2023audiovisual}, the largest multi-source benchmark with 5,158 test videos spanning 220 diverse categories beyond musical instruments.
Like VGGSound-Instruments, it synthesises dual-source scenarios but uses bounding box annotations from the VGGSound-Source test set~\cite{Chen2021LVS}.

SCAV achieves \red{the best performance among self-supervised methods} across all metrics in \cref{tab:two_bbox_datasets} and \cref{tab:two_mask_datasets_vertical}, \red{even surpassing} some weakly-supervised methods \red{on certain metrics}.
\red{These consistent gains} on the larger-scale dataset \red{suggest} that our approach particularly benefits from increased data diversity and scale.

\paragraph{\textbf{Important observation.}}
During analysis of the VGGSound-Duet results, we discovered a fundamental flaw in the evaluation of the bounding box that may have been overlooked by the community.
Bounding boxes treat all enclosed pixels equally: both object and background receive the same weight in IoU computation.
This evaluation protocol essentially rewards over-activation rather than precise localisation.
\cref{fig:bbox_issue} vividly demonstrates this issue with several representative examples.
In the \textit{playing flute} case, while our model precisely localises the sound-producing region around the instrument, the ground truth
bounding box includes substantial empty corners due to the diagonal instrument orientation.
Similarly, for \textit{turkey gobbling}, our heatmap identifies the turkey as the sound source, yet the bounding box encompasses large background areas.
These examples reveal a possible misalignment: current box-based metrics paradoxically penalise precise localisation while rewarding models that indiscriminately activate background regions within bounding boxes.

\begin{figure}[t]
      \centering
      \includegraphics[width=\columnwidth]{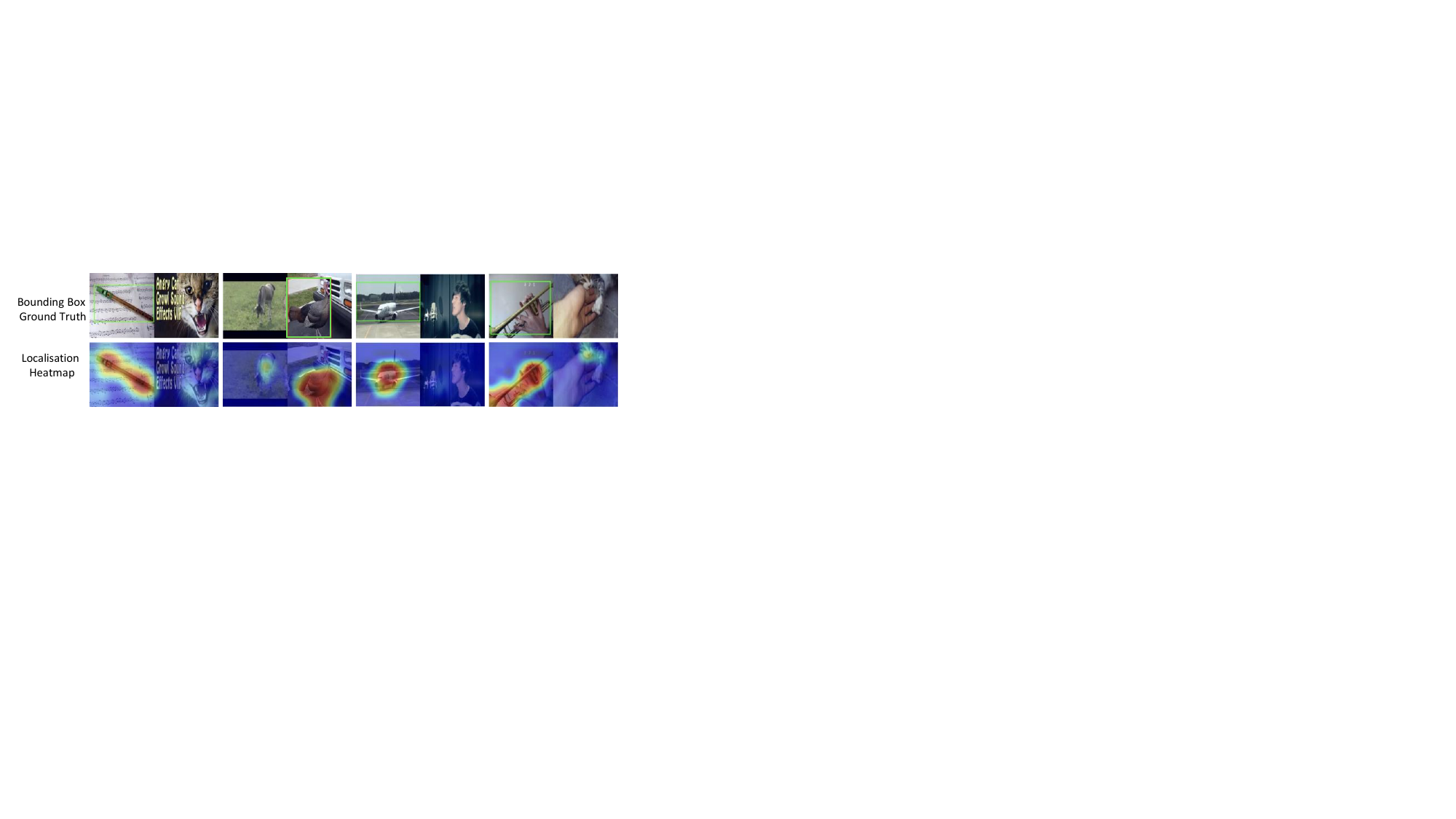}
      \caption{\textbf{Bounding box evaluation can be misleading.} Shown by four representative examples. Top: Ground-truth bounding boxes encompass substantial background regions beyond the actual sound sources. Bottom: Our predicted localisation heatmaps precisely identify sounding regions.}
      \label{fig:bbox_issue}
\end{figure}

Recent work in object detection~\cite{evaluate_right} identified a similar misalignment of the evaluation and demonstrated that segmentation masks provide a more faithful assessment by precisely delineating object boundaries.
Although VGGSound-Instruments~\cite{mix_and_localize} introduced segmentation masks, its limited scale (446 test samples) and restriction to musical instruments prevent comprehensive evaluation across diverse acoustic scenarios.
However, large-scale pixel-precise benchmarks remain scarce, as segmentation annotation was prohibitively expensive before recent advances like SAM~\cite{sam}.
This observation motivates us to establish a rigorous evaluation benchmark for the audio-visual localisation task using segmentation masks at scale, which could enable accurate measurement of what models truly localise across diverse sound categories.

\subsection{A New Evaluation Protocol}
\label{subsec:new_protocol}
\paragraph{\textbf{VGGSound-Duet$^{\text{Mask}}$.}}
To address the evaluation limitations identified above, we introduce VGGSound-Duet$^{\text{Mask}}$, a pixel-precise version of the VGGSound-Duet benchmark~\cite{mo2023audiovisual}.
We start from the VGGSound-Source (VGG-SS) test set~\cite{Chen2021LVS}, which originally contains 5,158 single-source videos with bounding box annotations.
Due to YouTube's availability, 4,390 videos remain accessible.
For each video, we generate high-quality segmentation masks using SAM~\cite{sam}, with the original bounding boxes as initial prompts.
To construct the multi-source evaluation, we follow the exact pairing protocol of VGGSound-Duet~\cite{mo2023audiovisual}, yielding 3,951 dual-source test pairs by concatenating frames and mixing audio from two videos.
Unlike VGGSound-Instruments, which offers only 446 mask-annotated samples restricted to musical categories, our VGGSound-Duet$^{\text{Mask}}$ benchmark provides nearly 10$\times$ more samples across 220 diverse sound categories.
\cref{tab:benchmark_comparison} summarises existing multi-source localisation benchmarks; our VGGSound-Duet$^{\text{Mask}}$ provides the largest mask-annotated test set.
We detail the construction of this benchmark in \appref{sec:dataset_details}.

\red{A related task that also produces pixel-level masks is audio-visual segmentation (AVS)~\cite{zhou2022avs,zhou2023avss,gao2024avsegformer,guo2025avis}. AVS  targets a different question from ours: they segment the sounding objects in a scene as a whole, either as a single foreground mask in AVSBench~\cite{zhou2022avs}, by semantic category in AVSS~\cite{zhou2023avss}, or as separate instances in AVISeg~\cite{guo2025avis}, and the audio serves only as a cue for whether an object is sounding. Our VGGSound-Duet$^{\text{Mask}}$ instead evaluates the decomposition of a two-source audio mixture into a separate localisation for each source, which no existing AVS benchmark provides.}

\begin{table*}[t]
  \centering
\resizebox{\textwidth}{!}{
  \begin{threeparttable}
  \caption{Summary of Multi-Source Sound Source Localisation Benchmarks.}
  \label{tab:benchmark_comparison}
  \begin{tabular}{llcll}
  \toprule
  Dataset & Year & Test Samples & Annotation Type & \#Classes \\
  \midrule
  MUSIC-Duet~\cite{dsol2020}  & 2018& 17\tnote{a} & Bounding Box & 11  \\
  MUSIC-Synthetic~\cite{dsol2020} & 2020 & 455 & Bounding Box & 15 \\
  VGGSound-Instruments~\cite{mix_and_localize} & 2022 & 446 & Segmentation Mask & 37 \\
  VGGSound-Duet~\cite{mo2023audiovisual} & 2023
    & \textcolor{gray}{\sout{5,158}} 3,951\tnote{b} & Bounding Box & 220 \\
  \textbf{VGGSound-Duet$^{\textbf{Mask}}$ (Ours)} & \textbf{2025} & \textbf{3,951} &
    \textbf{Segmentation Mask} & \textbf{220} \\
  \bottomrule
  \end{tabular}
  \begin{tablenotes}
  \tiny{
    \item[a] MUSIC-Duet reports video-level counts, where each video contains multiple annotated frames, while all other benchmarks count individual frames.
    \item[b] Among the videos released, 3,951 samples are currently (as of Oct. 2025) available; the remaining videos are unavailable on YouTube due to copyright issues or removal by uploaders.
    }
  \end{tablenotes}
  \end{threeparttable}
  }
\end{table*}

\paragraph{\textbf{Quantitative results.}}
\cref{tab:two_mask_datasets_vertical} presents the results under our pixel-precise evaluation protocol.
Transitioning from bounding boxes to segmentation masks changes the evaluation dynamics: models can no longer benefit from activating irrelevant background pixels within rectangular regions.
SCAV's consistent improvement under mask-based evaluation confirms genuine, pixel-level localisation of sound-producing regions rather than reliance on coarse spatial heuristics.

\begin{figure*}[t]
      \centering
      \includegraphics[width=\textwidth]{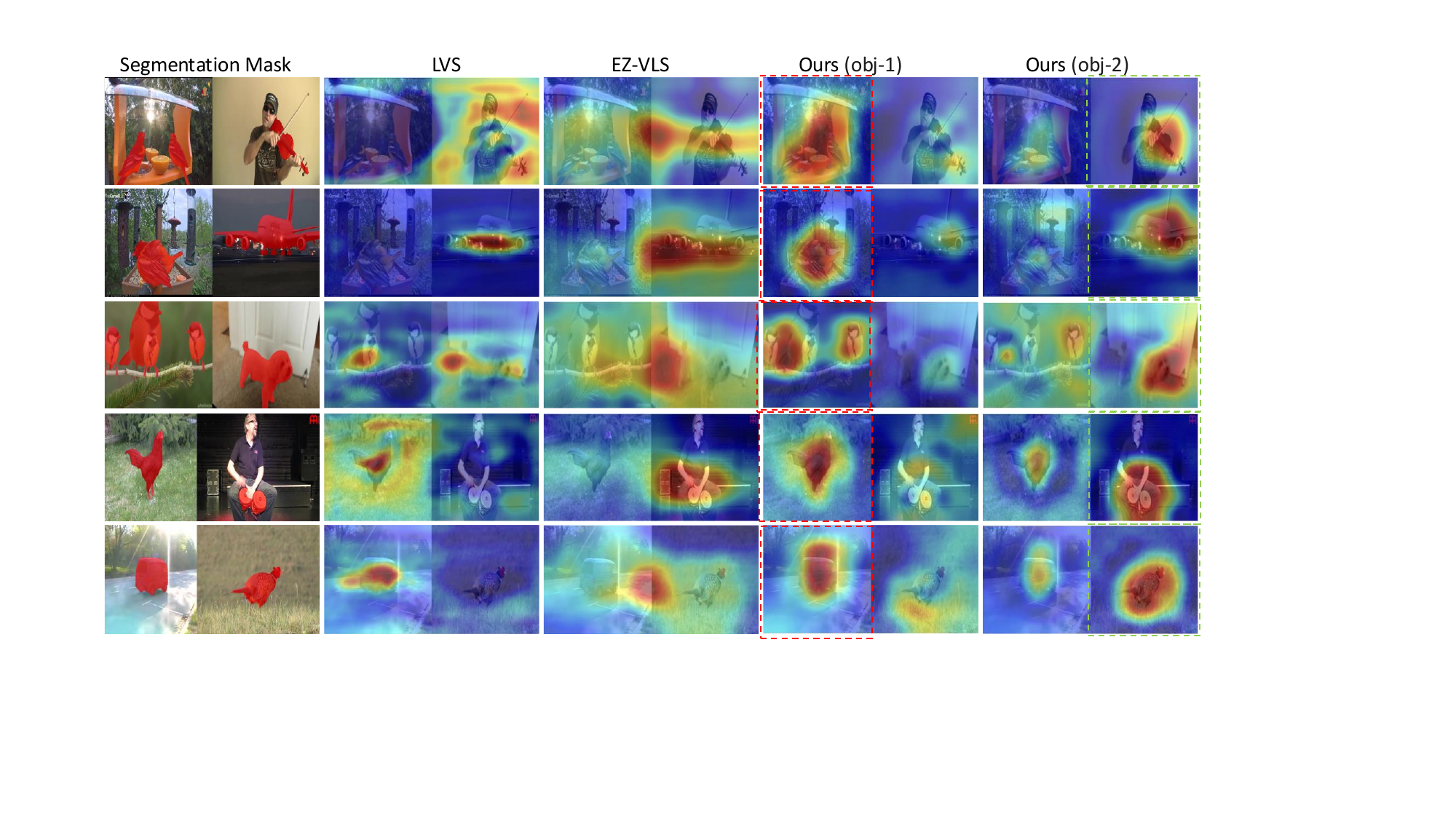}
      \caption{\textbf{Qualitative comparison of multi-source sound localisation.} For each test sample containing mixed audio from two sources, from left to right, we show: \textbf{First}, input frame with segmentation masks indicating both sound sources. \textbf{Second and Third}, localisation heatmaps from baseline models LVS~\cite{Chen2021LVS} and EZ-VSL~\cite{Mo2022EZVSL}. \textbf{Fourth and Last}, localisation heatmaps for each detected source from our SCAV method. The heatmaps visualise predicted sound source locations, where warmer colours indicate higher activation values. Best viewed in colour.}
      \label{fig:qualitative}
\end{figure*}

\paragraph{\textbf{Qualitative results.}} \cref{fig:qualitative} presents the localisation maps generated by different methods on representative multi-source scenarios.
The baseline models LVS~\cite{Chen2021LVS} and EZ-VSL~\cite{Mo2022EZVSL} struggle to localise both sources simultaneously.
As shown in the final two columns, our method generates two independent localisation maps that accurately identify each individual sound source.

Interestingly, in the third row, where multiple birds are present, our method produces connected regions that appropriately cover all birds when localising bird sounds.
This demonstrates its ability to handle spatially distributed sources of the same category.
We also observe that each localisation map exhibits a primary activation pattern: while one source receives strong activation, the other source's region may show weaker residual activations.
This soft separation reflects our progressive decoupling design: our method focuses on each source but maintains contextual awareness rather than enforcing hard boundaries.
This helps when sources have overlapping characteristics or ambiguous boundaries, as the model handles uncertainty naturally.
This qualitative evidence, corroborating our quantitative results, demonstrates that our SCAV method successfully breaks the circular dependency problem and allows each decoupled branch to focus on distinct acoustic sources without interference.
\red{More qualitative comparisons and failure cases are provided in \appref{sec:additional_results}.}

\subsection{Empirical Analysis of Selective Convergence}
To validate selective convergence in the first stage, we conduct an empirical analysis that complements the theoretical justification provided in \appref{sec:supp_theory}.
Specifically, when trained with contrastive learning on mixed audio, the first-stage model naturally focuses on one dominant source rather than attempting to cover all sources equally.
To this end, we design a controlled experiment that directly measures whether the model's predictions concentrate on a single source or spread across multiple sources.

We evaluate the first-stage model on VGGSound-Duet, where each test sample is constructed by concatenating two frames into a $224 \times 448$ image while mixing their corresponding audio waveforms.
For each sample, we have two ground truth masks $\text{GT}_1$ and $\text{GT}_2$, where each mask covers one source region (either the left or right half of the concatenated image) and is zero-padded on the other half. The first-stage model takes the concatenated image and mixed audio as input, and produces a single heatmap $\mathbf{M}_{\text{dom}}$ over the full $224 \times 448$ image domain.
We compute the IoU between the predicted heatmap $\mathbf{M}_{\text{dom}}$ and each ground truth:
\begin{equation}
\text{IoU}_1 = \text{IoU}(\mathbf{M}_{\text{dom}}, \text{GT}_1), \quad \text{IoU}_2 =
\text{IoU}(\mathbf{M}_{\text{dom}}, \text{GT}_2).
\end{equation}

For each sample, we identify the dominant source as the one with a higher IoU:
$i^* = \arg\max(\text{IoU}_1, \text{IoU}_2)$
and denote the other source as $j^*$.
We then compute metrics separately for the dominant source $i^*$ and the second source $j^*$ across all test samples.

\begin{table}[t]
  \centering
  \caption{\textbf{Localisation performance comparison between one dominant source and across dual sources.} It shows that the first-stage model tends to selectively converge on one of the sounding source objects.}
  \label{tab:selective_convergence}
  \resizebox{0.6\columnwidth}{!}{%
  \begin{tabular}{lccc}
    \toprule
    \textbf{Setting} & \textbf{CAP (\%)} & \textbf{CIoU@0.3 (\%)} & \textbf{AUC (\%)} \\
    \midrule
    \textbf{Dominant Source} & 63.39 & 61.70 & 37.82 \\
    \textbf{Second Source} & 13.91 \textcolor{BrickRed}{(-49.48)} & 3.62 \textcolor{BrickRed}{(-58.08)} & 8.12 \textcolor{BrickRed}{(-29.70)} \\
    \bottomrule
  \end{tabular}%
  }
\end{table}

\begin{figure}[t]
      \centering
      \includegraphics[width=\columnwidth]{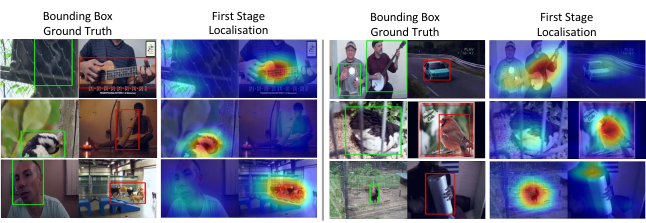}
      \caption{\textbf{Examples of selective convergence behaviour of the first stage.} We show several input sample frames with both ground truth sources (the first and third columns) and the first stage output heatmap (the second and fourth columns). Green boxes indicate the sounding object in the left frame, while red boxes indicate the sounding object in the right frame.
      The heatmaps consistently show strong activation on only one source while producing weak or negligible responses on the other.}
      \label{fig:selective_convergence}
\end{figure}

\cref{tab:selective_convergence} reveals a substantial performance gap between the dominant source and the second source, which provides direct empirical evidence of selective convergence: the model's heatmap predominantly focuses on one source rather than distributing attention across both.
\cref{fig:selective_convergence} further visualises this behaviour across diverse acoustic scenarios.
Both quantitative and qualitative results indicate that the first stage produces focused, dominant-source localisations rather than diffused activations that attempt to cover multiple sources.

\subsection{Ablation Study and Discussion}

\begin{table*}[t]
\centering
\caption{\textbf{Ablation study on VGGSound-Duet.}}
\label{tab:ablation}
\resizebox{\linewidth}{!}{
\begin{tabular}{c|cccc|cccc|cccc}
\toprule
\multirow{2}{*}{\textbf{Setting~\#}}
& \multicolumn{4}{c|}{\textbf{Components}}
& \multicolumn{4}{c|}{\textbf{VGGSound-Duet (BBox-based)}}
& \multicolumn{4}{c}{\textbf{VGGSound-Duet$^{\text{Mask}}$}} \\
\cmidrule(lr){2-5} \cmidrule(lr){6-9} \cmidrule(lr){10-13}
& \footnotesize{Stage~2}
& \footnotesize{Spatial~Prior}
& \footnotesize{Progr.~Train}
& \footnotesize{Cross-neg.}
& \footnotesize{CAP (\%)}
& \footnotesize{CIoU@0.1(\%)}
& \footnotesize{CIoU@0.3(\%)}
& \footnotesize{AUC (\%)}
& \footnotesize{CAP}
& \footnotesize{CIoU@0.1}
& \footnotesize{CIoU@0.3}
& \footnotesize{AUC} \\
\midrule

(a) & \cmark & \xmark$^*$ & \cmark & \xmark
& 37.23 & 73.38 & 28.94 & 21.94
& 27.39 & 52.58 & 16.15 & 15.47 \\

(b) & \xmark & --- & --- & \xmark
& 38.65 & 75.15 & 32.66 & 22.97
& 28.93 & 53.41 & 18.84 & 16.27 \\

(c) & \cmark & \cmark & \xmark & \xmark
& 45.17 & 79.41 & 41.84 & 26.88
& 32.71 & 63.33 & 25.62 & 19.68 \\

(d) & \cmark & \cmark & \cmark & \cmark
& 35.68 & 67.38 & 26.69 & 20.93
& 26.00 & 50.92 & 17.40 & 15.75 \\

\textbf{Ours}
& \cmark & \cmark & \cmark & \xmark
& \textbf{52.96} & \textbf{83.17} & \textbf{53.27} & \textbf{31.96}
& \textbf{39.30} & \textbf{69.42} & \textbf{33.59} & \textbf{23.52} \\

\bottomrule
\multicolumn{13}{l}{\footnotesize $^*$Uses uniform mask instead of learned spatial prior.}
\end{tabular}
}
\end{table*}

We conduct ablation studies on the VGGSound-Duet dataset to analyse the contribution of each component in our SCAV framework, as shown in \cref{tab:ablation}.
The ablation results reveal the interdependencies among components.
Comparing Settings~(a) and~(b) shows that the decoupling architecture relies heavily on spatially informative guidance to function effectively.
Setting~(c) uses the same architecture as our full model but differs only in training strategy: joint end-to-end training versus progressive training.
Despite having identical components, joint training achieves only 41.84\% CIoU@0.3 compared to our 53.27\%.
This demonstrates that the training strategy plays a crucial role in our framework.
In joint training, the first-stage encoder has not yet converged during early training. This leads to unstable spatial priors that mislead the second-stage decoupler.
Progressive training addresses this by first training the first stage to convergence before training the second stage, which yields more stable spatial priors.
\red{Setting~(d) adds cross-sample negatives on top of our full model and degrades performance, since Stage~2 is essentially a within-sample source-assignment problem where the useful contrast lies between the dominant source and the other source inside the same mixture. Such cross-sample negatives instead dilute this within-sample alignment.}

\section{Conclusion}
\label{sec:conclusion}
In this work, we introduced selective convergence for dual-source audio-visual localisation, an emergent property of contrastive audio-visual learning where models naturally focus on dominant sources rather than attempting a joint multi-source representation.
By leveraging selective convergence, we transformed the circular dependency problem inherent in multi-source localisation into a tractable sequence of conditional optimisations.
The proposed SCAV framework achieved \red{the best performance among self-supervised methods in the literature, and} competitive performance compared to weakly-supervised approaches.
Notably, the performance gap over baselines widens on larger-scale datasets, which indicates that our method effectively embraces the richness of large-scale audio-visual data.

Our empirical analysis also revealed systematic biases in current bounding box evaluation protocols, following which we introduced pixel-precise benchmarks that better reflect actual localisation quality.
Future work could explore whether selective convergence generalises to other multi-modal learning scenarios beyond audio-visual correspondence, potentially for handling ambiguous many-to-many associations in a self-supervised manner.

\section*{Acknowledgements}
This project is partially supported by an Amazon Research Award. The computations in this research were partially performed using the Baskerville Tier 2 HPC service. Baskerville was funded by the EPSRC and UKRI through the World Class Labs scheme (EP\textbackslash T022221\textbackslash1) and the Digital Research Infrastructure programme (EP\textbackslash W032244\textbackslash1) and is operated by Advanced Research Computing at the University of Birmingham.

\clearpage

\phantomsection
\addcontentsline{toc}{section}{Appendix}
\renewcommand{\theHsection}{supp.\Alph{section}}
\renewcommand{\theHsubsection}{supp.\Alph{section}.\arabic{subsection}}
\renewcommand{\theHsubsubsection}{supp.\Alph{section}.\arabic{subsection}.\arabic{subsubsection}}
\renewcommand{\theHfigure}{supp.\arabic{figure}}
\renewcommand{\theHtable}{supp.\arabic{table}}
\renewcommand{\theHequation}{supp.\arabic{equation}}
\renewcommand{\theHalgorithm}{supp.\arabic{algorithm}}

\section*{Appendix Roadmap}
\begin{itemize}
    \item In \appref{sec:additional_results}, we present \textbf{additional experimental results}, including extended visual comparisons with multi-source baselines and representative failure cases.
    \item In \appref{sec:dataset_details}, we describe the \textbf{VGGSound-Duet$^{\text{Mask}}$ benchmark}, covering the segmentation mask generation pipeline and handling of sound-producing interactions.
    \item In \appref{sec:supp_algorithm}, we detail the \textbf{dominant source extraction} post-processing algorithm and validate its contribution via ablation.
    \item In \appref{subsec:eval_protocol}, we formalise the \textbf{evaluation protocol}, clarifying the difference between \textit{source-wise} and \textit{frame-wise} evaluation and why \textit{source-wise} evaluation inflates metrics.
    \item In \appref{sec:supp_theory}, we provide a \textbf{theoretical analysis of selective convergence}, showing how the thresholding mechanism amplifies any initial audio-visual correspondence imbalance into a winner-take-all selection of a single source.
    \item In \appref{sec:noprior_reproduction}, we document our \textbf{reproduction of a recent baseline} under a standard frame-wise evaluation protocol, including code modifications for dual-source evaluation, hyperparameter verification, and qualitative comparisons on VGGSound-Duet and MUSIC-Duet.
\end{itemize}
\clearpage

\renewcommand{\thetable}{A\arabic{table}}
\renewcommand{\thefigure}{A\arabic{figure}}
\setcounter{table}{0}
\setcounter{figure}{0}
\setcounter{section}{0}
\renewcommand{\thesection}{\Alph{section}}
\renewcommand{\thesubsection}{\thesection.\arabic{subsection}}

\section{Additional Experimental Results}
\label{sec:additional_results}
\subsection{Extended Visual Comparisons}
The qualitative comparison in \cref{fig:qualitative} covers single-source localisation methods (LVS and EZ-VSL) that produce only a single aggregated heatmap for all sound sources and cannot localise each individual sounding object.
Here, we extend our visual analysis in \cref{fig:extended_comparison} to include Mix-and-Localize~\cite{mix_and_localize}.

\begin{figure*}
      \centering
      \includegraphics[width=\textwidth]{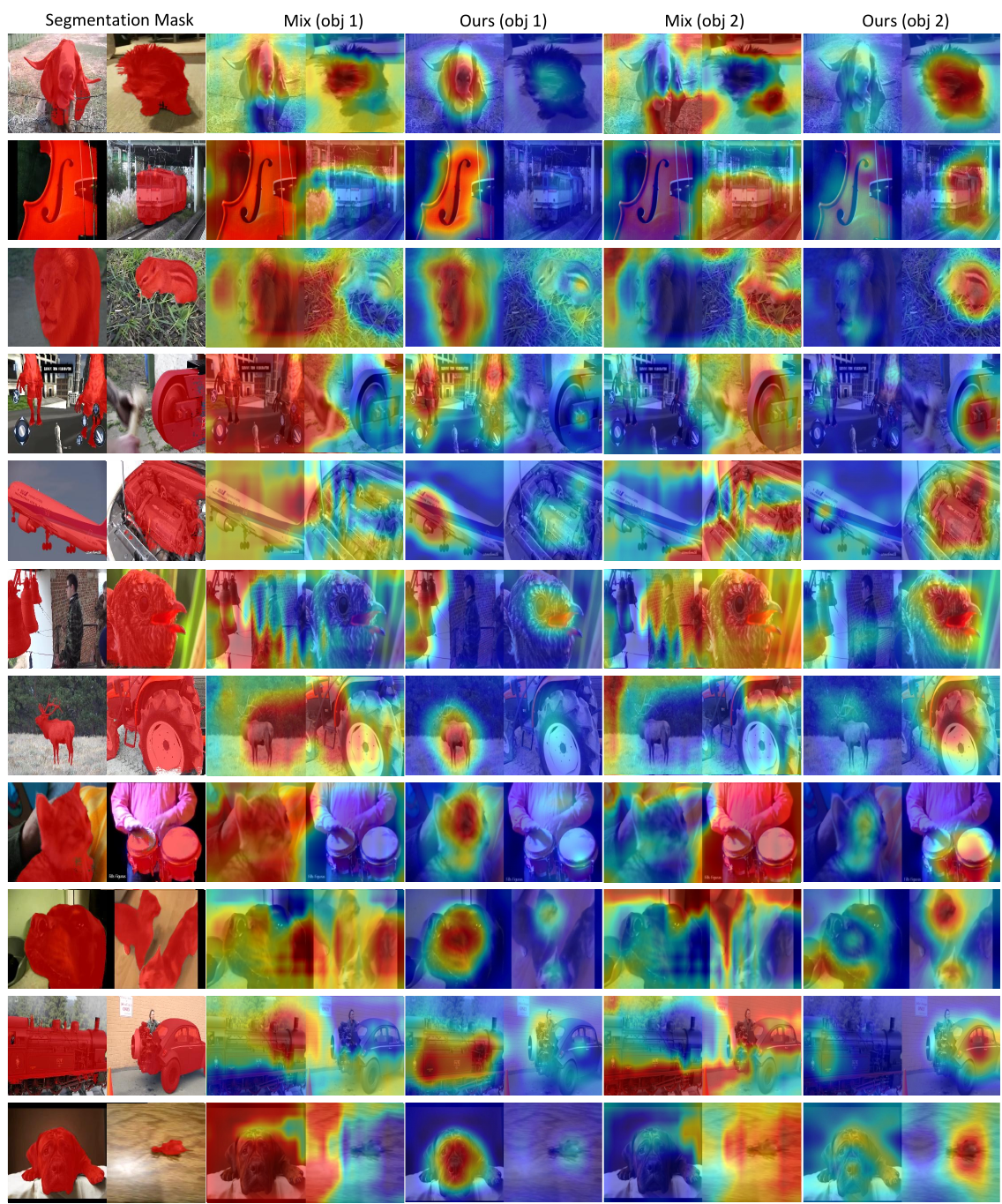}
      \caption{\textbf{Comparison with Mix-and-Localize~\cite{mix_and_localize} on multi-source scenarios.} The first column shows ground truth segmentation masks for both sound sources. The second and third columns present the first source localisation from Mix-and-Localise and our SCAV method, respectively. The fourth and fifth columns display the second source localisation from both methods.}
      \label{fig:extended_comparison}
\end{figure*}

\subsection{Failure Case Analysis}
While our method achieves good performance overall, it fails in several challenging scenarios.
\cref{fig:failure_cases} illustrates three representative failure modes observed in our experiments.

\paragraph{\textbf{Localisation inaccuracy.}} Our method correctly identifies the presence of both sound sources but produces imprecise spatial boundaries. This inaccuracy manifests in two ways: for large objects, the heatmap activations struggle to cover the entire sounding region, while for small objects, the activations extend beyond the actual source boundaries. As shown in the first row's right object, the activation region poorly matches the object's actual size. This limitation might stem from our 7×7 patch resolution, which lacks the fine-grained spatial granularity needed to precisely delineate objects of varying scales.

\begin{figure*}[t]
      \centering
      \includegraphics[width=\textwidth]{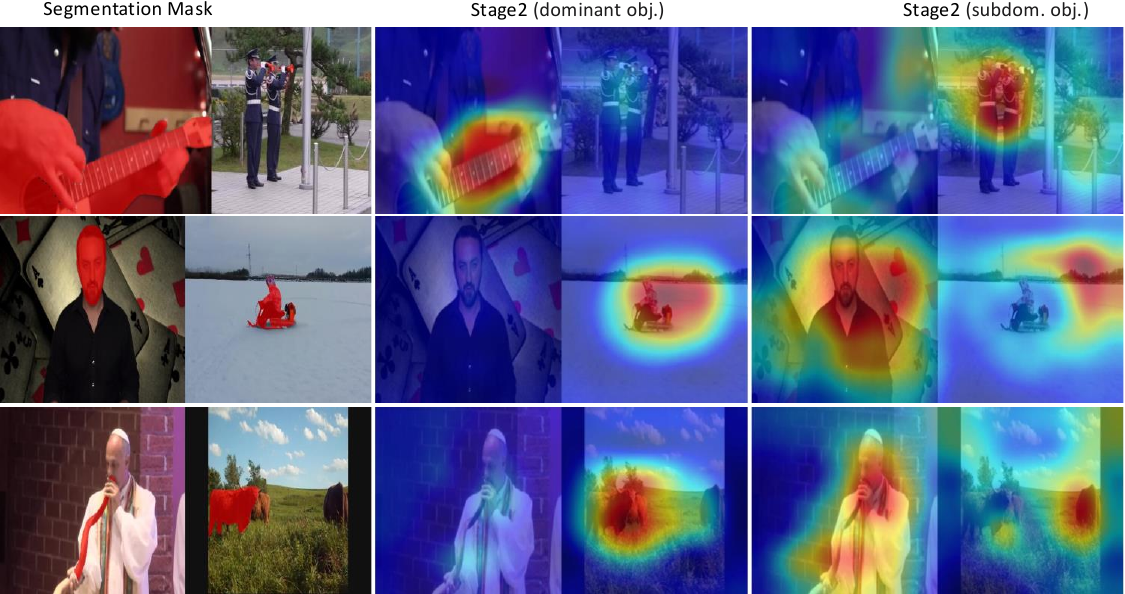}
      \caption{\textbf{Representative failure cases of our method.} The first row demonstrates localisation inaccuracy where activation regions poorly match object boundaries. The second row shows separation failure with spurious activations from multiple sources appearing in a single localisation map. The third row illustrates false activation, where the model activates correlated but non-sounding regions.}
      \label{fig:failure_cases}
\end{figure*}

\paragraph{\textbf{Separation failure.}} One or both localisation maps contain activations from multiple sound sources instead of clean individual separations.
In the second row's right example in \cref{fig:failure_cases}, the human speech localisation map shows spurious activations near the motorcycle.
One possible explanation is that audio source separation becomes more challenging when the source volumes are highly imbalanced. In this case, the motorcycle's engine noise is louder in the mixed audio, which may hinder the extraction of the human speech component.
As a result, residual motorcycle sounds in the separated human audio channel could lead to false activations near the motorcycle during audio-visual similarity computation.

\paragraph{\textbf{False activation.}} The model activates incorrect visual regions that correlate with but do not produce the actual sound. In the left example of the third row, where a man plays a wind instrument, our ground truth correctly annotates the instrument as the sound source, whereas the model's activation focuses on the man's mouth region.
This mislocalisation may be explained by several factors. First, our motion-based approach may favour regions with stronger visual dynamics, and the mouth movements are more pronounced than the subtle motion of the instrument. Second, the instrument's thin profile and light colouration may make it less visually distinctive from the background, which could encourage the model to focus on the more visually salient facial features despite their indirect relationship to sound production.

These failure modes suggest several potential limitations of self-supervised audio-visual learning, including the challenge of scale-aware localisation with fixed patch sizes, the difficulty of separating severely imbalanced audio sources, and a tendency to focus on visually salient features rather than the actual sound-producing objects. Future work may address these issues through multi-scale architectures, improved audio separation techniques, and stronger semantic understanding of sound production mechanisms.

\section{Dataset Details}
\label{sec:dataset_details}
We introduce VGGSound-Duet$^{\text{Mask}}$, a pixel-precise evaluation benchmark for multi-source sound localisation.
Our VGGSound-Duet$^{\text{Mask}}$ provides the largest test set with pixel-level segmentation masks, covering 220 sound classes across 3,951 samples.

\subsection{Segmentation Mask Generation}
Our annotation pipeline leverages the Segment Anything Model (SAM)~\cite{sam} to convert bounding boxes into precise segmentation masks through a multi-stage process:

First, for each test sample, we use the bounding box provided by VGGSound-Source~\cite{Chen2021LVS} as a prompt for SAM to generate initial segmentation proposals. SAM's box-prompted segmentation produces three candidate masks with associated confidence scores.

Next, annotators review the three automatically generated mask candidates alongside the original image and bounding box. The interface displays all candidates simultaneously for comparison.
Annotators then select the most accurate segmentation from these options.

Then, when none of the box-prompted candidates adequately capture the sound source, annotators generate new masks with point prompts. They click on specific locations within the object to guide SAM toward more accurate segmentation boundaries.

Finally, the selected masks are saved in pickle format as binary arrays where sounding source regions have a pixel value of ``1'' and ``0'' otherwise.

\subsection{Handling Sound-Producing Interactions}
A critical challenge in sound source annotation arises when sounds result from interactions between
multiple objects.
We establish clear guidelines to ensure consistency across different interaction scenarios.

\paragraph{\textbf{Musical instruments.}} For musical instruments played by humans, we follow the annotation protocol established by VGGSound-Instruments~\cite{mix_and_localize}.
We annotate only the instrument itself, not the performer or their hands.
This decision maintains semantic consistency: the instrument is the primary sound source, while human interaction serves only as the activation mechanism.

\paragraph{\textbf{Object interactions.}}
For sounds produced through interactions between non-musical objects, we annotate objects within the bounding box provided by VGGSound-Source that actively participate in sound production.
For example, in ``sawing trees'', the VGGSound-Source bounding box encompasses both the saw and the portion of the tree being cut.
Following this guidance, our segmentation mask includes both the saw and the specific tree region within the bounding box.

\section{Post-processing for Dominant Source Extraction}
\label{sec:supp_algorithm}

\begin{algorithm}
\caption{Post-processing for Dominant Source Extraction}
\label{alg:dominant_extraction}
\definecolor{codeblue}{rgb}{0.25,0.5,0.5}
\begin{algorithmic}
  \STATE \textbf{Input:} Similarity map $\mathcal{S} \in \mathbb{R}^{h \times w}$,
         Visual features $V' \in \mathbb{R}^{D \times h \times w}$
  \STATE \textbf{Output:} Dominant source map $\mathbf{M}_{\text{dom}} \in [0,1]^{h \times w}$
  \STATE
  \STATE \textcolor{codeblue}{// Step 1: Extract background embedding}
  \STATE $N(i,j) = 1 - \sigma\bigl((\mathcal{S}(i,j) - \epsilon_p)/\tau\bigr)$
         \hfill $\triangleright$ soft background mask, $\in \mathbb{R}^{h \times w}$
  \STATE $v_{\text{neg}} = \frac{\sum_{i,j} N(i,j) \cdot \mathcal{S}(i,j) \cdot V'(:,i,j)}
         {\sum_{i,j} N(i,j) \cdot \mathcal{S}(i,j)}$
         \hfill $\triangleright$ weighted mean, $\in \mathbb{R}^{D}$
  \STATE
  \STATE \textcolor{codeblue}{// Step 2: Extract dominant source seed}
  \STATE $(i^*, j^*) = \arg\max_{i,j}\, \mathcal{S}(i,j)$
  \STATE $v_1 = V'(:, i^*, j^*)$
         \hfill $\triangleright$ seed embedding, $\in \mathbb{R}^{D}$
  \STATE
  \STATE \textcolor{codeblue}{// Step 3: Per-patch distance comparison}
  \FOR{each position $(i,j)$}
    \STATE $d^+(i,j) = \| V'(:,i,j) - v_1 \|_2$
           \hfill $\triangleright$ distance to seed
    \STATE $d^-(i,j) = \| V'(:,i,j) - v_{\text{neg}} \|_2$
           \hfill $\triangleright$ distance to background
    \STATE $\mathbf{M}_{\text{dom}}(i,j) = d^-(i,j) - d^+(i,j)$
  \ENDFOR
  \STATE
  \STATE \textcolor{codeblue}{// Step 4: Normalize to $[0,1]$}
  \STATE $\mathbf{M}_{\text{dom}} \leftarrow
         \frac{\mathbf{M}_{\text{dom}} - \min(\mathbf{M}_{\text{dom}})}
              {\max(\mathbf{M}_{\text{dom}}) - \min(\mathbf{M}_{\text{dom}})}$
  \STATE
  \RETURN $\mathbf{M}_{\text{dom}}$
  \end{algorithmic}
\end{algorithm}

As mentioned in \cref{subsec:stage1}, after the first stage produces a similarity map $\mathcal{S}$ via selective convergence, we apply \cref{alg:dominant_extraction} to extract a cleaner dominant source localisation map $\mathbf{M}_{\text{dom}}$.
The audio-visual similarity map obtained by Stage1 under selective convergence may contain weak noise or residual activations.

\begin{figure*}[t]
    \centering
    \includegraphics[width=0.95\textwidth]{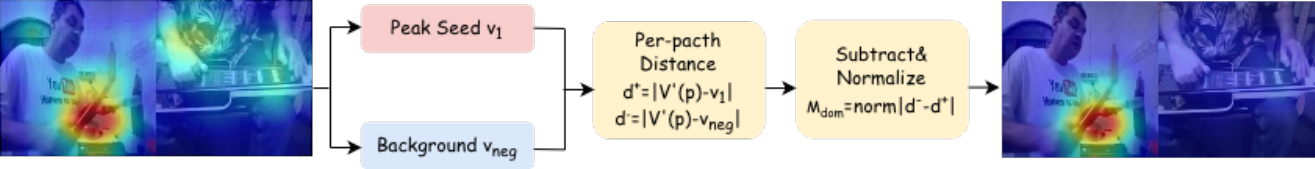}
    \caption{\textbf{Illustration of the post-processing algorithm for dominant source extraction.}
    The pipeline takes the raw similarity map $\mathcal{S}$ (left) with residual activations and produces a clean dominant source map   $\mathbf{M}_{\text{dom}}$ (right).
    The processing flow extracts two reference embeddings: the peak seed $v_1$ from the strongest activation and the background embedding $v_{\text{neg}}$ from low-activation regions, then assigns each spatial location based on its relative feature distance to the seed versus the background.}
    \label{fig:supp_algorithm}
\end{figure*}

\cref{fig:supp_algorithm} illustrates this process: \cref{alg:dominant_extraction} refines this map using visual feature similarity: for each location, it computes whether the visual feature is more similar to the peak response or to the background, and assigns response values accordingly.

\begin{table*}[h!]
\centering
\caption{\textbf{Ablation study on VGGSound-Duet evaluating the contribution of \cref{alg:dominant_extraction}.}}
\label{tab:ablation_algorithm1}
\resizebox{\linewidth}{!}{
\begin{tabular}{c|cccc|cccc}
\toprule
\multirow{2}{*}{\textbf{Components}}
& \multicolumn{4}{c|}{\textbf{VGGSound-Duet (BBox-based)}}
& \multicolumn{4}{c}{\textbf{VGGSound-Duet$^{\text{Mask}}$}} \\
\cmidrule(lr){2-5} \cmidrule(lr){6-9}
& \footnotesize{CAP (\%)}
& \footnotesize{CIoU@0.1(\%)}
& \footnotesize{CIoU@0.3(\%)}
& \footnotesize{AUC (\%)}
& \footnotesize{CAP}
& \footnotesize{CIoU@0.1}
& \footnotesize{CIoU@0.3}
& \footnotesize{AUC} \\
\midrule

\footnotesize{w/o \cref{alg:dominant_extraction}}
& 52.32 & 83.06 & 52.44 & 31.67
& 38.93 & 69.38 & 33.26 & 23.40 \\

\footnotesize{w/ \cref{alg:dominant_extraction}}
& \textbf{52.96} & \textbf{83.17} & \textbf{53.27} & \textbf{31.96}
& \textbf{39.30} & \textbf{69.42} & \textbf{33.59} & \textbf{23.52} \\

\bottomrule
\end{tabular}
}
\end{table*}

As shown in \cref{tab:ablation_algorithm1}, removing \cref{alg:dominant_extraction} leads to only a marginal performance drop. This suggests that selective convergence alone is sufficient to provide effective spatial priors, which allows Stage 2 to remain performant even without the additional refinement.

\section{Evaluation Protocol}
\label{subsec:eval_protocol}

Early multi-source sound localisation works, including Mix-and-Localize~\cite{mix_and_localize} and DSOL~\cite{dsol2020}, evaluate on MUSIC-Duet~\cite{sound_of_pixel} with a \emph{frame-wise} protocol: each predicted heatmap is compared against the ground-truth mask over the \emph{full} image frame.
This is a natural choice because MUSIC-Duet consists of real duet recordings in which two instruments perform simultaneously within the same video; the two sound sources may appear at arbitrary spatial locations, and there is no synthetic spatial partition that would allow the evaluation to be restricted to a sub-region of the frame.

Subsequently, Mo~\etal~\cite{mo2023audiovisual} proposed VGGSound-Duet, a \emph{synthetic} dual-source benchmark in which each test sample is constructed by horizontal concatenation of two single-source frames into one $H{\times}2W$ image ($H{=}W{=}224$) (see \cref{fig:eval_protocol}, left), and their audio waveforms are summed to form the mixed audio signal.
In their released evaluation code, Mo~\etal adopted a \emph{source-wise} protocol: each predicted heatmap is cropped to the half-frame that contains its designated source, and IoU is computed within that $H{\times}W$ region only. Kim~\etal~\cite{kim2024learning} subsequently followed the same protocol.

In this work, the results reported in \cref{tab:two_bbox_datasets,tab:two_mask_datasets_vertical} use the frame-wise protocol, consistent with Mix-and-Localize~\cite{mix_and_localize} and DSOL~\cite{dsol2020}. This also ensures a unified evaluation standard across MUSIC-Duet~\cite{sound_of_pixel} and VGGSound-Duet~\cite{mo2023audiovisual}, since MUSIC-Duet naturally requires frame-wise evaluation and the same protocol on VGGSound-Duet allows results to be directly compared across the two benchmarks. Below, we formally define both protocols, present their mathematical relationship, and illustrate them in \cref{fig:eval_protocol}.

\paragraph{\textbf{Notations.}}
Let $(i,j)$ denote pixel coordinates with $i \in \{0,\ldots,H{-}1\}$ for rows and $j \in \{0,\ldots,2W{-}1\}$ for columns. By construction, Source~1 is entirely contained within the left half $\Omega_L = \{(i,j) \mid j < W\}$ and Source~2 within the right half $\Omega_R = \{(i,j) \mid j \ge W\}$. Note that this left--right separation is purely an artefact of the synthetic construction; in real-world scenes, multiple sound sources may appear at arbitrary spatial locations.
The model therefore cannot exploit this spatial layout as a prior and must localise each source based solely on audio-visual correspondence.
The ground truth comprises two independent half-frame masks:
$\mathcal{M}_1^{\mathrm{half}} \in \{0,1\}^{H\times W}$ for the sounding object in $\Omega_L$, and
$\mathcal{M}_2^{\mathrm{half}} \in \{0,1\}^{H\times W}$ for the sounding object in $\Omega_R$.

\paragraph{\textbf{Frame-wise evaluation.}}
Under frame-wise evaluation, each half-frame ground-truth mask is zero-padded to the full resolution:
\begin{align}
  \mathcal{M}_1^{+} &= \bigl[\mathcal{M}_1^{\mathrm{half}},\;
                        \mathbf{0}^{H\times W}\bigr]
                      \in \{0,1\}^{H\times 2W}, \label{eq:gt_pad1}\\
  \mathcal{M}_2^{+} &= \bigl[\mathbf{0}^{H\times W},\;
                        \mathcal{M}_2^{\mathrm{half}}\bigr]
                      \in \{0,1\}^{H\times 2W}. \label{eq:gt_pad2}
\end{align}
Each heatmap $\mathcal{H}_k \in [0,1]^{H\times 2W}$ is binarised to obtain a full-frame binary mask
$\mathcal{B}_k \in \{0,1\}^{H\times 2W}$.
IoU is then computed over the full domain $\Omega = \Omega_L \cup \Omega_R$:
\begin{equation}
  \mathrm{IoU}^{\,\text{frame}}_{k}
  = \frac{|\,\mathcal{B}_k \;\cap\; \mathcal{M}_k^{+}|}
         {|\,\mathcal{B}_k \;\cup\; \mathcal{M}_k^{+}|},
  \qquad k \in \{1,2\},
  \label{eq:iou_def}
\end{equation}
where both $\mathcal{B}_k$ and $\mathcal{M}_k^{+}$ span the entire $H{\times}2W$ image.

\paragraph{\textbf{Source-wise evaluation.}}
Under source-wise evaluation, each predicted heatmap is \emph{cropped} to the half-frame that contains its designated source, and IoU is computed entirely within that $H{\times}W$ region (see \cref{fig:eval_protocol}, middle).
Concretely, the model produces two continuous heatmaps $\mathcal{H}_1, \mathcal{H}_2 \in [0,1]^{H\times 2W}$ over the full concatenated image.
Each heatmap is then binarised and cropped to the corresponding half: $\hat{\mathcal{B}}_1 \in \{0,1\}^{H\times W}$ covers only $\Omega_L$, and $\hat{\mathcal{B}}_2 \in \{0,1\}^{H\times W}$ covers only $\Omega_R$.
IoU is computed against the half-frame ground-truth mask:
\begin{equation}
  \mathrm{IoU}^{\,\text{src}}_{k}
  = \frac{|\,\hat{\mathcal{B}}_k \;\cap\; \mathcal{M}_k^{\mathrm{half}}|}
         {|\,\hat{\mathcal{B}}_k \;\cup\; \mathcal{M}_k^{\mathrm{half}}|},
  \qquad k \in \{1,2\},
  \label{eq:iou_source}
\end{equation}
where both $\hat{\mathcal{B}}_k$ and $\mathcal{M}_k^{\mathrm{half}}$ are defined over the same $H{\times}W$ domain.
Any activations that the model produces on the \emph{opposite} half are discarded before the metric is evaluated.

We verify this protocol from the official evaluation code released with Mo~\etal~\cite{mo2023audiovisual}.
The key steps of the \texttt{validate\_multi} function are reproduced below:

\begin{lstlisting}
for j in range(num_mixtures):
    avl_map = model(image[:,j].float(),
                    spec.float(), ...)[1].unsqueeze(1)
    avl_map = F.interpolate(avl_map, size=(224, 224), ...)
    avl_map_list.append(avl_map)

gt_map = bboxes['gt_map'][i].numpy()
for j in range(num_mixtures):
    pred = normalize_img(avl_map[i, j, 0])
    if j == 0:
        evaluator_0.update(bb, gt_map[j], conf, pred, thr, name[i])
    elif j == 1:
        evaluator_1.update(bb, gt_map[j], conf, pred, thr, name[i])
\end{lstlisting}

\begin{lstlisting}
def update(self, bb, gt, conf, pred, pred_thr, name):
    infer = np.zeros((224, 224))
    infer[pred >= pred_thr] = 1
    ciou = np.sum(infer * gt) / (np.sum(gt)
           + np.sum(infer * (gt == 0)))
\end{lstlisting}

Three observations confirm the source-wise protocol:
(1)~The model performs a separate forward pass for each single-source frame \texttt{image[:,j]} of size  $H{\times}W$, producing a $224{\times}224$ heatmap per source (\texttt{num\_mixtures\,=\,2}).
The two sources are never jointly evaluated on the concatenated $H{\times}2W$ image.
(2)~Each heatmap is compared only against its corresponding half-frame ground truth \texttt{gt\_map[j]} of shape $224{\times}224$;
Activations that the model might produce for the \emph{other} source are never visible to the evaluator.
(3)~Inside \texttt{EvaluatorFull.update()}, the binary prediction mask \texttt{infer} is explicitly allocated as a $224{\times}224$ array. This indicates that IoU is computed entirely within the $H{\times}W$ half-frame domain. As a result, cross-source false positives are structurally excluded from the IoU computation.

\paragraph{\textbf{Hungarian matching.}}
Under both protocols, the two predicted heatmaps have no inherent ordering, so we apply Hungarian matching to find the optimal assignment before computing any metric.
For the $n$-th test sample with predictions $\mathcal{H}_{n,1}, \mathcal{H}_{n,2} \in [0,1]^{H\times 2W}$ and ground-truth masks $\mathcal{M}_{n,1}, \mathcal{M}_{n,2}$, we select the permutation
\begin{equation}
  \sigma_n^{*} = \argmax_{\sigma \in \mathfrak{S}_2}
  \sum_{k=1}^{2}
    \mathrm{IoU}\!\bigl(\mathcal{H}_{n,k},\;\mathcal{M}_{n,\sigma(k)}\bigr),
  \label{eq:hungarian}
\end{equation}
and re-index the ground truth accordingly.
After matching, each sample contributes two source-level (prediction, GT) pairs.
The metrics CAP, CIoU, and AUC defined in \cref{subsec:metrics} are then computed over all $K{=}2N$ such pairs (where $N$ is the number of test samples), with each pair treated as an independent class $k$ with $\delta_k{=}1$.

\paragraph{\textbf{Mathematical relationship between the two protocols.}}
The source-wise and frame-wise formulations share the same numerator: because $\mathcal{M}_k^{+}$ is zero on the opposite half, the intersection
$|\mathcal{B}_k \cap \mathcal{M}_k^{+}|$ counts only pixels in the \emph{target} half, which is identical to $|\hat{\mathcal{B}}_k \cap \mathcal{M}_k^{\mathrm{half}}|$.
The difference lies entirely in the denominator.
Let $f_{\bar{k}} = \sum_{p\in\Omega_{\bar{k}}} \mathcal{B}_k(p)$ denote the number of activations on the \emph{opposite} half (\ie cross-source activations).
Then the two denominators relate as:
\begin{equation}
  \underbrace{|\,\mathcal{B}_k \;\cup\; \mathcal{M}_k^{+}\,|}_{\text{frame-wise (Eq.~\eqref{eq:iou_def})}}
  \;=\;
  \underbrace{|\,\hat{\mathcal{B}}_k \;\cup\; \mathcal{M}_k^{\mathrm{half}}\,|}_{\text{source-wise (Eq.~\eqref{eq:iou_source})}}
  \;+\; f_{\bar{k}}.
  \label{eq:denom_compare}
\end{equation}
Since $f_{\bar{k}} \ge 0$ and the numerators are identical, it follows that
\begin{equation}
  \mathrm{IoU}^{\,\text{src}}_{k} \;\ge\; \mathrm{IoU}^{\,\text{frame}}_{k},
  \label{eq:iou_inequality}
\end{equation}
with equality if and only if $f_{\bar{k}} = 0$, \ie the model produces zero activation on the opposite half.
The two protocols therefore yield identical results when a model activates exclusively within its target half, and differ only when cross-source activations are present.

\paragraph{\textbf{What each protocol measures.}}
The two protocols correspond to different aspects of multi-source localisation.
Source-wise evaluation measures \emph{detection}: whether the model can identify the correct region within the half-frame that contains each source.
Because IoU is computed within the $H{\times}W$ half-frame only, any activations on the opposite half are not considered (see \cref{fig:eval_protocol}, middle).
Frame-wise evaluation measures both \emph{detection} and \emph{separation}: the model must not only activate the correct region for each source, but also suppress activations at the location of the other source.
As shown in Eq.~\eqref{eq:iou_inequality}, when a model produces activations on the opposite half ($f_{\bar{k}} > 0$), these contribute to the denominator under frame-wise evaluation but not under source-wise evaluation, causing the two protocols to yield different scores for the same prediction.

\begin{figure}[t]
  \centering
  \includegraphics[width=\columnwidth]{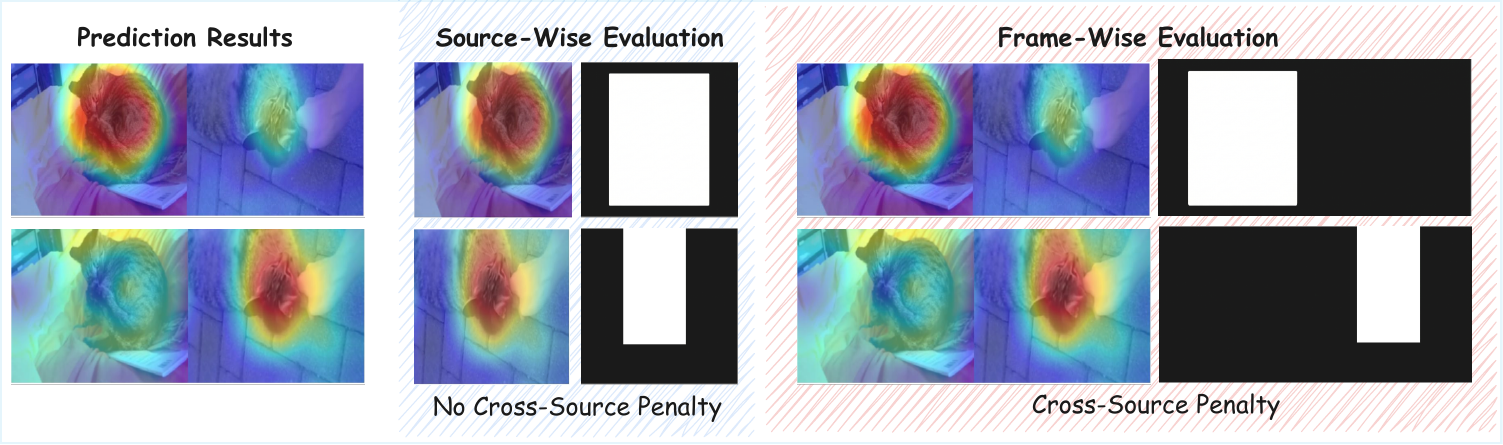}
  \caption{\textbf{Source-wise vs.\ frame-wise evaluation protocol.} \textit{Left}: A synthetic dual-source test sample is constructed by horizontally concatenating two single-source frames. \textit{Middle}: Source-wise evaluation restricts each heatmap to the corresponding half-frame, so cross-source activations are never penalised, which inflates all IoU-based metrics. \textit{Right}: Frame-wise evaluation assesses each heatmap over the full frame against a zero-padded ground-truth mask, penalising any failure to suppress the non-target side.}
  \label{fig:eval_protocol}
\end{figure}

\section{Theoretical Analysis of Selective Convergence}
\label{sec:supp_theory}

We provide a formal analysis of why Stage~1 exhibits selective convergence. The goal is to explain why the model commits to a \emph{single} source rather than represent both equally; it is \emph{not} to predict which particular source is selected. Consistent with the main text, we define the dominant source as whichever source the model converges to, and our framework depends only on the fact that it converges to one of them, not on which one. Our analysis, which adapts the simplicity-bias framework of Xue~\etal~\cite{xue2023icml}, shows that any initial imbalance in audio-visual correspondence is amplified by the differentiable thresholding mechanism into a winner-take-all dynamic.

\subsection{Problem Setup and Assumptions}

\paragraph{Data model.}
Consider a training sample with visual frame $I$ containing two sound sources in disjoint spatial regions $\Omega_1, \Omega_2 \subset \{1,\ldots,h\}\times\{1,\ldots,w\}$.
The mixed audio is encoded as $\mathbf{a}_{\text{mix}} = \mathbf{a}_1 + \mathbf{a}_2 \in \mathbb{R}^D$, where $\mathbf{a}_k$ denotes the contribution of source~$k$ to the mixture.
The visual encoder produces spatial features $V'(p) \in \mathbb{R}^D$ at each position~$p$.

To enable rigorous analysis, we make the following assumptions, which parallel the feature decomposition framework of Xue~\etal~\cite{xue2023icml}:

\begin{assumption}[Feature Orthogonality]
\label{assump:orthogonal}
There exist two orthonormal basis vectors $\mathbf{u}_1, \mathbf{u}_2 \in \mathbb{R}^D$ such that:
\begin{equation}
  \mathbf{a}_1 = \alpha_1 \mathbf{u}_1, \quad \mathbf{a}_2 = \alpha_2 \mathbf{u}_2, \quad \langle \mathbf{u}_1, \mathbf{u}_2 \rangle = 0,
\end{equation}
where $\alpha_1, \alpha_2 > 0$ denote the audio response magnitudes of the two sources along their respective directions.
\end{assumption}

\begin{assumption}[Visual Feature Locality]
\label{assump:locality}
For position $p \in \Omega_k$, the visual feature decomposes as:
\begin{equation}
  V'(p) = \beta_{k,p} \mathbf{u}_k + \boldsymbol{\epsilon}_p,
\end{equation}
where $\beta_{k,p} > 0$ is the matching strength, and $\boldsymbol{\epsilon}_p$ is noise satisfying $\mathbb{E}[\boldsymbol{\epsilon}_p] = \mathbf{0}$ and $\|\boldsymbol{\epsilon}_p\| \leq \sigma$.
\end{assumption}

\begin{assumption}[Low Noise]
\label{assump:noise}
The noise level is small relative to the signal:
\begin{equation}
  \sigma \ll \min(\alpha_1 \bar{\beta}_1, \alpha_2 \bar{\beta}_2).
\end{equation}
\end{assumption}

For brevity we write $g_k := \alpha_k \bar{\beta}_k$ for the average similarity that region $\Omega_k$ attains at initialisation, where $\bar{\beta}_k := \mathbb{E}_{p \in \Omega_k}[\beta_{k,p}]$ is the average visual matching strength in its region. Generically, the two regions are not exactly balanced; without loss of generality, we label the leading region --- the one the model ends up converging to --- as source~1, so that $g_1 > g_2$. This labels the eventual winner rather than predicting it: our analysis characterises how an existing gap is amplified, not what creates it.

\subsection{Similarity Decomposition and Regional Comparison}

\begin{lemma}[Similarity Decomposition]
\label{lem:decomp}
Under Assumptions~\ref{assump:orthogonal} and~\ref{assump:locality}, the similarity map decomposes as:
\begin{equation}
  \mathcal{S}(p) = \langle V'(p), \mathbf{a}_{\text{mix}} \rangle =
  \begin{cases}
    \alpha_1 \beta_{1,p} + \xi_p, & p \in \Omega_1 \\
    \alpha_2 \beta_{2,p} + \xi_p, & p \in \Omega_2
  \end{cases}
\end{equation}
where $\xi_p = \langle \boldsymbol{\epsilon}_p, \alpha_1 \mathbf{u}_1 + \alpha_2 \mathbf{u}_2 \rangle$ is a noise term with $|\xi_p| \leq \sigma(\alpha_1 + \alpha_2)$.
\end{lemma}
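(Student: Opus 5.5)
The proof is a direct computation: expand the inner product by bilinearity, then bound the leftover noise term with Cauchy--Schwarz. No earlier result beyond Assumptions~\ref{assump:orthogonal} and~\ref{assump:locality} is needed.

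\textbf{Step 1 (substitution).} Fix $k \in \{1,2\}$ and a position $p \in \Omega_k$. By Assumption~\ref{assump:orthogonal} we have $\mathbf{a}_{\text{mix}} = \alpha_1 \mathbf{u}_1 + \alpha_2 \mathbf{u}_2$. By Assumption~\ref{assump:locality} we have $V'(p) = \beta_{k,p}\mathbf{u}_k + \boldsymbol{\epsilon}_p$. Substituting both gives
\begin{equation*}
\mathcal{S}(p) = \beta_{k,p}\,\alpha_1\langle \mathbf{u}_k,\mathbf{u}_1\rangle + \beta_{k,p}\,\alpha_2\langle \mathbf{u}_k,\mathbf{u}_2\rangle + \langle \boldsymbol{\epsilon}_p, \alpha_1\mathbf{u}_1+\alpha_2\mathbf{u}_2\rangle .
\end{equation*}

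\textbf{Step 2 (orthonormality).} Since $\langle \mathbf{u}_k,\mathbf{u}_l\rangle = \delta_{kl}$, exactly one of the first two terms survives, and it equals $\alpha_k\beta_{k,p}$. The third term is, by definition, $\xi_p$. This yields both cases of the displayed decomposition.

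\textbf{Step 3 (noise bound).} Cauchy--Schwarz gives $|\xi_p| \le \|\boldsymbol{\epsilon}_p\|\,\|\alpha_1\mathbf{u}_1+\alpha_2\mathbf{u}_2\| \le \sigma\,\|\alpha_1\mathbf{u}_1+\alpha_2\mathbf{u}_2\|$. The triangle inequality then bounds the remaining norm by $\alpha_1\|\mathbf{u}_1\| + \alpha_2\|\mathbf{u}_2\| = \alpha_1+\alpha_2$, which is the stated bound. Orthonormality actually gives the sharper value $\sqrt{\alpha_1^2+\alpha_2^2} \le \alpha_1+\alpha_2$; I would note this in passing, but the claimed bound is all the lemma requires.

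\textbf{Main obstacle.} There is no genuine obstacle here; the argument is purely algebraic. The one thing to check is that the notation is consistent. Assumption~\ref{assump:locality} states the visual feature decomposition only for $p \in \Omega_1 \cup \Omega_2$, so the lemma is asserted only on those regions; this matches the two cases in the statement. Likewise, the hypothesis $\mathbb{E}[\boldsymbol{\epsilon}_p]=\mathbf{0}$ is not used for the bound itself. It only implies $\mathbb{E}[\xi_p]=0$, which I would record as a remark for later use in the regional comparison.
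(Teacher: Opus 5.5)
Your proof is correct and follows the paper's argument essentially verbatim: substitute the two assumptions, use orthonormality to kill the cross term, and bound $\xi_p$ by Cauchy--Schwarz together with $\|\alpha_1\mathbf{u}_1+\alpha_2\mathbf{u}_2\|\le\alpha_1+\alpha_2$. Two small differences are harmless refinements: you treat a general $p\in\Omega_k$ at once rather than doing $\Omega_1$ and calling $\Omega_2$ analogous, and you note the sharper bound $\sqrt{\alpha_1^2+\alpha_2^2}$.
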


\begin{proof}
For $p \in \Omega_1$, substituting the decompositions from Assumptions~\ref{assump:orthogonal} and~\ref{assump:locality}:
\begin{align}
  \mathcal{S}(p) &= \langle \beta_{1,p} \mathbf{u}_1 + \boldsymbol{\epsilon}_p, \alpha_1 \mathbf{u}_1 + \alpha_2 \mathbf{u}_2 \rangle \\
  &= \beta_{1,p} \alpha_1 \langle \mathbf{u}_1, \mathbf{u}_1 \rangle + \beta_{1,p} \alpha_2 \langle \mathbf{u}_1, \mathbf{u}_2 \rangle + \langle \boldsymbol{\epsilon}_p, \alpha_1 \mathbf{u}_1 + \alpha_2 \mathbf{u}_2 \rangle \\
  &= \alpha_1 \beta_{1,p} + \xi_p,
\end{align}
where we used $\langle \mathbf{u}_1, \mathbf{u}_1 \rangle = 1$ and $\langle \mathbf{u}_1, \mathbf{u}_2 \rangle = 0$.
The bound on $|\xi_p|$ follows from Cauchy-Schwarz: $|\xi_p| \leq \|\boldsymbol{\epsilon}_p\| \cdot \|\alpha_1 \mathbf{u}_1 + \alpha_2 \mathbf{u}_2\| \leq \sigma(\alpha_1 + \alpha_2)$.
The case $p \in \Omega_2$ is analogous.
\end{proof}

\begin{proposition}[Regional Similarity Gap]
\label{prop:gap}
Under Assumptions~\ref{assump:orthogonal}--\ref{assump:noise}, the average similarity in region~$\Omega_1$ exceeds that in region~$\Omega_2$:
\begin{equation}
  \bar{\mathcal{S}}_1 := \mathbb{E}_{p \in \Omega_1}[\mathcal{S}(p)] > \bar{\mathcal{S}}_2 := \mathbb{E}_{p \in \Omega_2}[\mathcal{S}(p)].
\end{equation}
Specifically, $\bar{\mathcal{S}}_1 \approx \alpha_1 \bar{\beta}_1 = g_1$ and $\bar{\mathcal{S}}_2 \approx \alpha_2 \bar{\beta}_2 = g_2$, so the gap $\bar{\mathcal{S}}_1 - \bar{\mathcal{S}}_2 \approx g_1 - g_2 > 0$ follows directly from the labeling $g_1 > g_2$.
\end{proposition}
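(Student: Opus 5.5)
We average the pointwise decomposition of Lemma~\ref{lem:decomp} over each region and then use Assumption~\ref{assump:noise} to show the noise cannot close the gap.

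\textbf{Step 1: regional averages.} For $p\in\Omega_k$ the lemma gives $\mathcal{S}(p)=\alpha_k\beta_{k,p}+\xi_p$. Taking $\mathbb{E}_{p\in\Omega_k}$ and using linearity, with $\alpha_k$ independent of $p$,
\begin{equation}
  \bar{\mathcal{S}}_k = \alpha_k\bar{\beta}_k + \bar{\xi}_k = g_k + \bar{\xi}_k, \qquad \bar{\xi}_k := \mathbb{E}_{p\in\Omega_k}[\xi_p].
\end{equation}

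\textbf{Step 2: control the noise average.} There are two routes. In expectation over the noise, $\mathbb{E}[\boldsymbol{\epsilon}_p]=\mathbf{0}$ and $\xi_p$ is linear in $\boldsymbol{\epsilon}_p$, so $\mathbb{E}[\bar{\xi}_k]=0$. This already gives $\mathbb{E}[\bar{\mathcal{S}}_1]-\mathbb{E}[\bar{\mathcal{S}}_2]=g_1-g_2>0$, which justifies the ``$\approx$'' statements. For a deterministic (realisation-wise) version, the lemma gives $|\xi_p|\le\sigma(\alpha_1+\alpha_2)$, so $|\bar{\xi}_k|\le\sigma(\alpha_1+\alpha_2)$. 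Hence
\begin{equation}
  \bar{\mathcal{S}}_1-\bar{\mathcal{S}}_2 \ge (g_1-g_2) - 2\sigma(\alpha_1+\alpha_2).
\end{equation}

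\textbf{Step 3: conclude.} The gap is strictly positive whenever $2\sigma(\alpha_1+\alpha_2)<g_1-g_2$.

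\textbf{Main obstacle.} Assumption~\ref{assump:noise} only compares $\sigma$ with $\min(g_1,g_2)$. It says nothing about the gap $g_1-g_2$, which could be tiny, and it compares $\sigma$ with $g_k=\alpha_k\bar\beta_k$ rather than with $\alpha_1+\alpha_2$. So the worst-case bound does not close from the stated assumptions alone. I would handle this in one of two ways:
\begin{enumerate}
  \item Read ``$\ll$'' as the explicit quantitative condition $\sigma(\alpha_1+\alpha_2)<\tfrac12(g_1-g_2)$, stating it as the precise form of Assumption~\ref{assump:noise} that the proposition uses.
  \item Rely on the zero-mean route of Step~2, so that the strict inequality holds in expectation and the approximations $\bar{\mathcal{S}}_k\approx g_k$ hold up to $O(\sigma(\alpha_1+\alpha_2))$.
\end{enumerate}
I would present the expectation argument as the main proof, and give the worst-case bound as a remark that makes the required smallness of $\sigma$ explicit.
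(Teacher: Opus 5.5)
Your main (expectation) argument is the paper's proof: average the decomposition of Lemma~\ref{lem:decomp} over each region, use $\mathbb{E}[\boldsymbol{\epsilon}_p]=\mathbf{0}$ to make the noise term vanish in mean, and conclude $\bar{\mathcal{S}}_1\approx g_1>g_2\approx\bar{\mathcal{S}}_2$ from the labeling. Your worst-case remark is a correct sharpening the paper does not make: a strict inequality for a given noise realisation needs $2\sigma(\alpha_1+\alpha_2)<g_1-g_2$, which Assumption~\ref{assump:noise} as stated does not supply and which the paper's ``$\approx$'' silently absorbs.
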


\begin{proof}
Taking expectations in Lemma~\ref{lem:decomp}:
\begin{equation}
  \bar{\mathcal{S}}_1 = \alpha_1 \bar{\beta}_1 + \mathbb{E}[\xi_p], \quad \bar{\mathcal{S}}_2 = \alpha_2 \bar{\beta}_2 + \mathbb{E}[\xi_p].
\end{equation}
Since $\mathbb{E}[\boldsymbol{\epsilon}_p] = \mathbf{0}$, we have $\mathbb{E}[\xi_p] = 0$.
Hence $\bar{\mathcal{S}}_1 \approx g_1 > g_2 \approx \bar{\mathcal{S}}_2$ by the definition of the dominant source.
\end{proof}

\subsection{Thresholding-Induced Mask Concentration}

The positive score in the contrastive loss (\cref{eq:positive_score}) is:
\begin{equation}
  P = \frac{1}{|\hat{m}_p|} \sum_{p} \hat{m}_p(p) \cdot \mathcal{S}(p),
\end{equation}
where the positive mask $\hat{m}_p(p) = \sigma\!\bigl((\mathcal{S}(p) - \epsilon_p)/\tau\bigr)$ with small temperature~$\tau$ approximates a step function.

\begin{lemma}[Mask Concentration]
\label{lem:mask}
Suppose the threshold satisfies $\bar{\mathcal{S}}_1 - \delta > \epsilon_p > \bar{\mathcal{S}}_2 + \delta$ for some $\delta > 0$.
Under Assumption~\ref{assump:noise}, with probability at least $1 - \exp(-\Omega(\delta^2/\sigma^2))$, the mask concentrates on region~$\Omega_1$:
\begin{equation}
  \sum_{p \in \Omega_1} \hat{m}_p(p) \geq (1-\epsilon)|\Omega_1|, \quad \sum_{p \in \Omega_2} \hat{m}_p(p) \leq \epsilon|\Omega_2|,
\end{equation}
for arbitrarily small $\epsilon > 0$ as $\tau \to 0$.
\end{lemma}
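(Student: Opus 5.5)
The argument works position by position and then sums over each region. For a single $p$, the sigmoid mask satisfies $\hat m_p(p) = \sigma\bigl((\mathcal{S}(p)-\epsilon_p)/\tau\bigr)$, and this tends to $1$ if $\mathcal{S}(p) > \epsilon_p$ and to $0$ if $\mathcal{S}(p) < \epsilon_p$ as $\tau \to 0$. The mask therefore concentrates on $\Omega_1$ as soon as we show two things: with high probability every $p\in\Omega_1$ lies above the threshold by a margin of order $\delta$, and every $p\in\Omega_2$ lies below it by the same order of margin. Once such a margin $\delta' > 0$ is available, we get $\hat m_p(p) \ge \sigma(\delta'/\tau)$ on $\Omega_1$ and $\hat m_p(p)\le 1-\sigma(\delta'/\tau)$ on $\Omega_2$. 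Choosing $\tau$ small enough that $1-\sigma(\delta'/\tau)\le\epsilon$ then gives both stated inequalities, since $\sum_{p\in\Omega_1}\hat m_p(p)\ge(1-\epsilon)|\Omega_1|$ and similarly for $\Omega_2$.

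To get the margins I would use Lemma~\ref{lem:decomp}, which writes $\mathcal{S}(p)=\alpha_k\beta_{k,p}+\xi_p$ on $\Omega_k$, together with Proposition~\ref{prop:gap}, which gives $\bar{\mathcal{S}}_k \approx g_k$. The deviation of $\mathcal{S}(p)$ from its regional mean then splits into two parts:
\begin{itemize}
\item the noise term $\xi_p$, which has zero mean and satisfies $|\xi_p|\le\sigma(\alpha_1+\alpha_2)$;
\item the within-region variation $\alpha_k(\beta_{k,p}-\bar\beta_k)$.
\end{itemize}
I would treat the noise as a bounded, mean-zero (sub-Gaussian) variable. Hoeffding's bound then gives $\Pr\bigl(|\xi_p|\ge\delta/2\bigr)\le 2\exp\bigl(-c\,\delta^2/(\sigma^2(\alpha_1+\alpha_2)^2)\bigr)$. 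Absorbing the factor $(\alpha_1+\alpha_2)^2$ into the $\Omega(\cdot)$ constant yields the form $\exp(-\Omega(\delta^2/\sigma^2))$. Assumption~\ref{assump:noise} ensures this probability is small in the relevant regime. On the complementary event:
\begin{itemize}
\item for $p\in\Omega_1$, $\mathcal{S}(p)\ge \bar{\mathcal{S}}_1-\delta/2 > \epsilon_p+\delta/2$;
\item for $p\in\Omega_2$, $\mathcal{S}(p)\le \bar{\mathcal{S}}_2+\delta/2<\epsilon_p-\delta/2$.
\end{itemize}
This gives the margin $\delta'=\delta/2$.

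The main obstacle is the within-region variation of $\beta_{k,p}$. The assumptions do not bound it, so a pointwise statement fails if some positions in $\Omega_1$ have very weak matching strength. I would handle this in one of two ways:
\begin{itemize}
\item Assume, implicitly as the statement does, that $\beta_{k,p}$ is close to $\bar\beta_k$, absorbing $\alpha_k|\beta_{k,p}-\bar\beta_k|\le\delta/4$ into the margin (replacing $\delta/2$ by $\delta/4$ in the noise step).
\item Interpret the statement on average. Markov/Hoeffding applied to the fraction of positions in $\Omega_1$ falling below threshold shows that all but an $\epsilon$-fraction exceed $\epsilon_p$, which still yields the summed inequalities.
\end{itemize}
A union bound over the finitely many $hw$ positions changes only the constant inside $\Omega(\cdot)$. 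The $\tau\to0$ limit is then taken with $\delta$ fixed, completing the argument.
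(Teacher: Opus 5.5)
This is essentially the paper's proof, written out with explicit margins: you apply Lemma~\ref{lem:decomp} position by position, bound the noise via $|\xi_p|\le\sigma(\alpha_1+\alpha_2)$ plus a concentration step, and take the $\tau\to0$ limit of the sigmoid. Your first fix (assuming $\alpha_k|\beta_{k,p}-\bar\beta_k|$ is small) is exactly what the paper silently uses when it writes $\alpha_1\beta_{1,p}\approx\bar{\mathcal{S}}_1$, so that route goes through.

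Drop the second, ``on average'' fix, though. The within-region spread of $\beta_{k,p}$ is a fixed property of the data rather than noise, so Markov/Hoeffding cannot make it small. Moreover, a single $p_0\in\Omega_1$ with $\mathcal{S}(p_0)<\epsilon_p$ already forces $\limsup_{\tau\to0}\sum_{p\in\Omega_1}\hat m_p(p)\le|\Omega_1|-1$. Hence no fraction-based argument can give the bound for $\epsilon<1/|\Omega_1|$.
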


\begin{proof}
By Lemma~\ref{lem:decomp}, for $p \in \Omega_1$, $\mathcal{S}(p) = \alpha_1 \beta_{1,p} + \xi_p$.
Since $|\xi_p| \leq \sigma(\alpha_1 + \alpha_2)$ and $\sigma$ is small (Assumption~\ref{assump:noise}), with high probability $\mathcal{S}(p) \approx \alpha_1 \beta_{1,p} \approx \bar{\mathcal{S}}_1 > \epsilon_p$.
As $\tau \to 0$, $\sigma((\mathcal{S}(p) - \epsilon_p)/\tau) \to 1$ for such positions.
Similarly, for $p \in \Omega_2$, $\mathcal{S}(p) \approx \bar{\mathcal{S}}_2 < \epsilon_p$, so $\hat{m}_p(p) \to 0$.
The probability bound follows from standard concentration inequalities for bounded random variables.
\end{proof}

\subsection{Gradient Flow and Feature Suppression}

\begin{theorem}[Selective Convergence via Feature Suppression]
\label{thm:main}
Under Assumptions~\ref{assump:orthogonal}--\ref{assump:noise}, suppose the audio representation at step~$t$ is $\mathbf{a}^{(t)} = c_1^{(t)} \mathbf{u}_1 + c_2^{(t)} \mathbf{u}_2$.
If the threshold satisfies the condition in Lemma~\ref{lem:mask}, then the gradient update predominantly increases $c_1$ while leaving $c_2$ unchanged:
\begin{equation}
  c_1^{(t+1)} = c_1^{(t)} + \eta \bar{\beta}_1 + O(\epsilon), \quad c_2^{(t+1)} = c_2^{(t)} + O(\epsilon),
\end{equation}
where $\eta$ is the learning rate and $\epsilon \to 0$ as $\tau \to 0$.
\end{theorem}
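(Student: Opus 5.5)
I will compute the gradient of the Stage~1 loss with respect to the audio representation $\mathbf{a}^{(t)}$ and project it onto $\mathbf{u}_1$ and $\mathbf{u}_2$.

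\emph{Setup.} Since $\mathcal{S}(p) = \langle V'(p), \mathbf{a}^{(t)}\rangle$ is linear in $\mathbf{a}$, the chain rule gives $\nabla_{\mathbf{a}}\mathcal{S}(p) = V'(p)$. I focus on the positive term, $\mathcal{L} = -\log\bigl(e^{P}/(e^{P}+e^{N})\bigr)$. Then $-\partial\mathcal{L}/\partial P = 1-\mathrm{softmax}(P) =: \kappa\in(0,1)$. I absorb $\kappa$ into the effective step size, which is the reading under which $\eta\bar\beta_1$ appears in the statement. The $N$ term (within-sample negatives) I treat in the same way below.

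\emph{Gradient of $P$.} Differentiating $P = \sum_p \hat m_p(p)\mathcal{S}(p)/\sum_p\hat m_p(p)$ gives
\[
\nabla_{\mathbf{a}}P = \frac{1}{|\hat m_p|}\sum_p \Bigl[\hat m_p(p) + \hat m_p'(p)\bigl(\mathcal{S}(p)-P\bigr)\Bigr] V'(p),
\]
where $\hat m_p'(p)=\tau^{-1}\sigma'((\mathcal{S}(p)-\epsilon_p)/\tau)$. By Lemma~\ref{lem:mask}, with the stated probability every $p$ satisfies $|\mathcal{S}(p)-\epsilon_p|\ge\delta$ up to noise. Hence $\hat m_p'(p)\le \tau^{-1}e^{-\delta/\tau}\to 0$ as $\tau\to 0$, so the threshold-derivative term is $O(\epsilon)$. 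The remaining term is the mask-weighted average of $V'(p)$. Lemma~\ref{lem:mask} puts mass $\ge(1-\epsilon)$ on $\Omega_1$ and $\le\epsilon$ on $\Omega_2$.

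\emph{Projection.} I substitute Assumption~\ref{assump:locality} and project onto the basis. The $\mathbf{u}_1$-component is $\mathbb{E}_{p\in\Omega_1}\beta_{1,p} + O(\epsilon) + O(\sigma)$, which equals $\bar\beta_1+O(\epsilon)$. Here the noise averages out because $\mathbb{E}[\boldsymbol\epsilon_p]=0$, and the residual is absorbed via Assumption~\ref{assump:noise}. The $\mathbf{u}_2$-component receives contributions only from $\Omega_2$ positions, weighted by $\le\epsilon$, so it is $O(\epsilon)$. A gradient step $\mathbf{a}^{(t+1)}=\mathbf{a}^{(t)}+\eta\nabla P$ then yields the two displayed updates. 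The negative score $N$ is built on $\hat m_n$, with threshold $\epsilon_n<\epsilon_p$. It pushes $\mathbf{a}$ away from low-similarity regions, so it contributes a non-positive $\mathbf{u}_2$ term and nothing on $\mathbf{u}_1$ up to $O(\epsilon)$. Since the statement asks only that $c_2$ not grow, I would state this as a one-sided bound, or fold it into $O(\epsilon)$ under the simplifying view that the background dominates $\hat m_n$.

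\emph{Main obstacle.} The delicate step is controlling the derivative of the soft mask and of the normaliser $|\hat m_p|$. As $\tau\to0$, $\hat m_p'$ blows up near the threshold. The argument therefore relies on the separation margin $\delta$ from Lemma~\ref{lem:mask} keeping every position a distance $\delta$ from $\epsilon_p$. I would first try the pointwise bound $\sup_{|x|\ge\delta/\tau}\tau^{-1}\sigma'(x)\le\tau^{-1}e^{-\delta/\tau}$ and union-bound over the $hw$ positions. If cross-sample negatives are included, I would argue that their gradient averages to zero over the batch, or treat them as an additional bounded perturbation.
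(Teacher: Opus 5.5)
Your proof is essentially the paper's. The paper likewise writes $\partial P/\partial\mathbf{a}$ as the mask-weighted average of $V'(p)$, uses Lemma~\ref{lem:mask} to concentrate it on $\Omega_1$, substitutes Assumption~\ref{assump:locality} and projects onto $\{\mathbf{u}_1,\mathbf{u}_2\}$. It differs only in treating $\hat m_p$ as a constant and ascending $P$ alone, so your bound on the mask and normaliser derivatives and your absorption of $\kappa$ are refinements it omits. Two small points on that bound: it needs a per-position margin from $\epsilon_p$, which the lemma's ``arbitrarily small $\epsilon$'' conclusion supplies rather than its hypothesis on regional means; and the error it gives is $O(\epsilon/\tau)$ rather than $O(\epsilon)$, though it still vanishes as $\tau\to0$.

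Your aside on $N$ is not needed for the statement as the paper proves it, and as written it is not right. If $\Omega_2$ lies well below $\epsilon_n$, minimising $N$ lowers $c_2$ by roughly $\eta\kappa\bar\beta_2|\Omega_2|/|\hat m_n|$, which is not $O(\epsilon)$. If $\Omega_2$ sits near $\epsilon_n$, the $O(1/\tau)$ derivative of $\hat m_n$ makes minimising $N$ \emph{raise} those similarities, because pushing borderline positions out of the negative mask lowers its mean. So including $N$ requires an extra margin assumption at $\epsilon_n$, and even then generally gives only a one-sided bound on $c_2$.
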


\begin{proof}
The gradient of $P$ with respect to $\mathbf{a}$ is:
\begin{equation}
  \frac{\partial P}{\partial \mathbf{a}} = \frac{1}{|\hat{m}_p|} \sum_{p} \hat{m}_p(p) \cdot V'(p).
\end{equation}

By Lemma~\ref{lem:mask}, the sum is dominated by positions in $\Omega_1$:
\begin{equation}
  \frac{\partial P}{\partial \mathbf{a}} \approx \frac{1}{|\Omega_1|} \sum_{p \in \Omega_1} V'(p).
\end{equation}

Substituting Assumption~\ref{assump:locality} ($V'(p) = \beta_{1,p} \mathbf{u}_1 + \boldsymbol{\epsilon}_p$ for $p \in \Omega_1$):
\begin{equation}
  \frac{\partial P}{\partial \mathbf{a}} \approx \frac{1}{|\Omega_1|} \sum_{p \in \Omega_1} (\beta_{1,p} \mathbf{u}_1 + \boldsymbol{\epsilon}_p) = \bar{\beta}_1 \mathbf{u}_1 + O(\sigma).
\end{equation}

The gradient update is:
\begin{equation}
  \mathbf{a}^{(t+1)} = \mathbf{a}^{(t)} + \eta \frac{\partial P}{\partial \mathbf{a}} = (c_1^{(t)} + \eta \bar{\beta}_1) \mathbf{u}_1 + c_2^{(t)} \mathbf{u}_2 + O(\eta \sigma).
\end{equation}

Projecting onto the orthonormal basis $\{\mathbf{u}_1, \mathbf{u}_2\}$ yields the claimed result.
\end{proof}

\paragraph{Interpretation.}
Theorem~\ref{thm:main} formalises the feature suppression mechanism: the gradient update selectively strengthens the dominant source representation ($c_1$ increases) while providing no learning signal for the subdominant source ($c_2$ remains constant).
This creates a self-reinforcing cycle:
\begin{enumerate}
  \item \textbf{Initial asymmetry.} By Proposition~\ref{prop:gap}, $\bar{\mathcal{S}}_1 > \bar{\mathcal{S}}_2$ due to $g_1 > g_2$.
  \item \textbf{Mask concentration.} By Lemma~\ref{lem:mask}, $\hat{m}_p$ concentrates on $\Omega_1$.
  \item \textbf{Gradient bias.} By Theorem~\ref{thm:main}, gradients flow predominantly through $\mathbf{u}_1$, increasing $c_1$.
  \item \textbf{Amplification.} As $c_1$ grows, the similarity gap $\bar{\mathcal{S}}_1 - \bar{\mathcal{S}}_2$ widens, further concentrating the mask.
  \item \textbf{Convergence.} Eventually $c_1 \gg c_2$: the model represents only the dominant source.
\end{enumerate}

\paragraph{Scope.} Our analysis explains why training collapses onto a single source once any regional gap is present, and why such a gap is self-reinforcing. It does not predict which source becomes dominant: that depends on the origin of the initial gap, which lies outside our assumptions and is left to the data. This is consistent with our framework, which relies only on the selection of one source, not on its identity.

\subsection{Connection to Simplicity Bias}

Our analysis reveals that selective convergence is an instance of the \textit{simplicity bias} phenomenon identified by Xue~\etal~\cite{xue2023icml}.
They showed that when multiple features are available for learning, gradient descent preferentially learns features with higher signal-to-noise ratios while suppressing weaker ones.
In their framework, ``easy'' features (with stronger signals) dominate the learning dynamics, causing ``hard'' features (with weaker signals) to be ignored.

Our Theorem~\ref{thm:main} provides a concrete instantiation of this principle in the multi-source audio-visual setting:
\begin{itemize}
  \item The dominant source correspondence (with the larger initial similarity $g_1$) acts as the ``easy feature'' with higher signal strength.
  \item The subdominant source correspondence (with the smaller initial similarity $g_2 < g_1$) acts as the ``hard feature'' with lower signal strength.
  \item The differentiable thresholding mechanism amplifies this bias beyond standard contrastive learning by creating a winner-take-all dynamic: once the mask concentrates on $\Omega_1$, gradients for source~2 are effectively suppressed.
\end{itemize}

Xue~\etal~\cite{xue2023icml} analysed this phenomenon in the context of class collapse and feature suppression in contrastive learning.
Our contribution is to show that the same mechanism operates in multi-source audio-visual localisation, and crucially, that we can \emph{exploit} it rather than avoid it: the resulting spatial prior $\mathbf{M}_{\text{dom}}$ enables Stage~2 to break the circular dependency and progressively reveal the subdominant source.

\section{Reproduction Details for NoPrior}
\label{sec:noprior_reproduction}

As noted in \cref{tab:two_bbox_datasets}, the original NoPrior~\cite{kim2024learning} results are reported under a source-wise evaluation protocol (see \appref{subsec:eval_protocol}), whereas we adopt a frame-wise protocol throughout our paper. Since the two protocols are not directly comparable (\appref{subsec:eval_protocol}), we reproduce NoPrior from its official codebase and evaluate it under the frame-wise protocol for a fair comparison.
Additionally, the released codebase provides runnable scripts only for the single-source setting (\texttt{concat\_num=1}); switching to the dual-source setting requires code modifications before training can proceed. Below, we document the reproduction process for both benchmarks.

\subsection{Reproduction on VGGSound-Duet}
\label{subsec:noprior_vggduet}

\paragraph{\textbf{Code modifications.}}
We base our reproduction on the official code release of Kim~\etal~\cite{kim2024learning}.
The dual-source setting (\texttt{concat\_num=2}) requires one modification to the frame concatenation in \texttt{datasets\_flow.py}.
The released code uses \texttt{torch.cat([...], dim=1)}, which concatenates two frames along the height dimension and produces a $3{\times}448{\times}224$ tensor.
However, VGGSound-Duet frames are laid out with the two sources side by side: both the original benchmark visualisations~\cite{mo2023audiovisual} and the figures in Kim~\etal~\cite{kim2024learning} show a horizontal $H{\times}2W$ layout.
We therefore change the concatenation to \texttt{dim=2} for both image and optical flow tensors, yielding a $3{\times}224{\times}448$ tensor consistent with this layout.
The model architecture, loss functions, and training logic are left entirely unchanged.
For fair comparison, we use the same synthetic dual-source training set construction as our SCAV method: pairs of single-source videos are horizontally concatenated, and their audio waveforms are mixed.

\paragraph{\textbf{Hyperparameters.}}
All hyperparameters match those reported in Kim~\etal~\cite{kim2024learning}: visual backbone is ResNet-18 (frozen), learning rate = $10^{-4}$, batch size = 128, $\alpha$ (positive threshold) = 0.65, $\omega$ (sigmoid temperature) = 0.03, $\tau_1$ (background threshold) = 0.7, $\tau_2$ (clustering threshold) = 0.6, and $\lambda_1=\lambda_2$ (loss weights)=1.0.

\paragraph{\textbf{Results.}}
The reproduced NoPrior results under the frame-wise evaluation protocol are reported in \cref{tab:two_bbox_datasets}.
\cref{fig:noprior_viz_vggduet} shows representative predictions on VGGSound-Duet.
NoPrior's Iterative Object Identification (IOI) module is designed to produce one localisation per iteration: the first iteration selects the highest-response cell and expands it into a sound-making region, then excludes that region from the similarity map $\mathbf{S}_v$ before the next iteration~\cite{kim2024learning}.
In our reproduced results, the second-iteration heatmap exhibits near-zero activation in most examples, while the first heatmap covers regions with the strongest audio-visual correspondence without separating them into distinct sources.
We note that the visualisation results in the original NoPrior supplementary material (Figure~2 in~\cite{kim2024learning}) show Object~1 and Object~2 heatmaps divided along a left/right boundary with spatially discontinuous activations. In the frame-wise evaluation setting, where two sound sources naturally co-occur within a single frame, left--right spatial position cannot serve as a prior for source separation. The released code does not contain a post-processing step that produces the spatial partition shown in Figure~2 of~\cite{kim2024learning}.

\begin{figure*}[t]
    \centering
    \includegraphics[width=\textwidth]{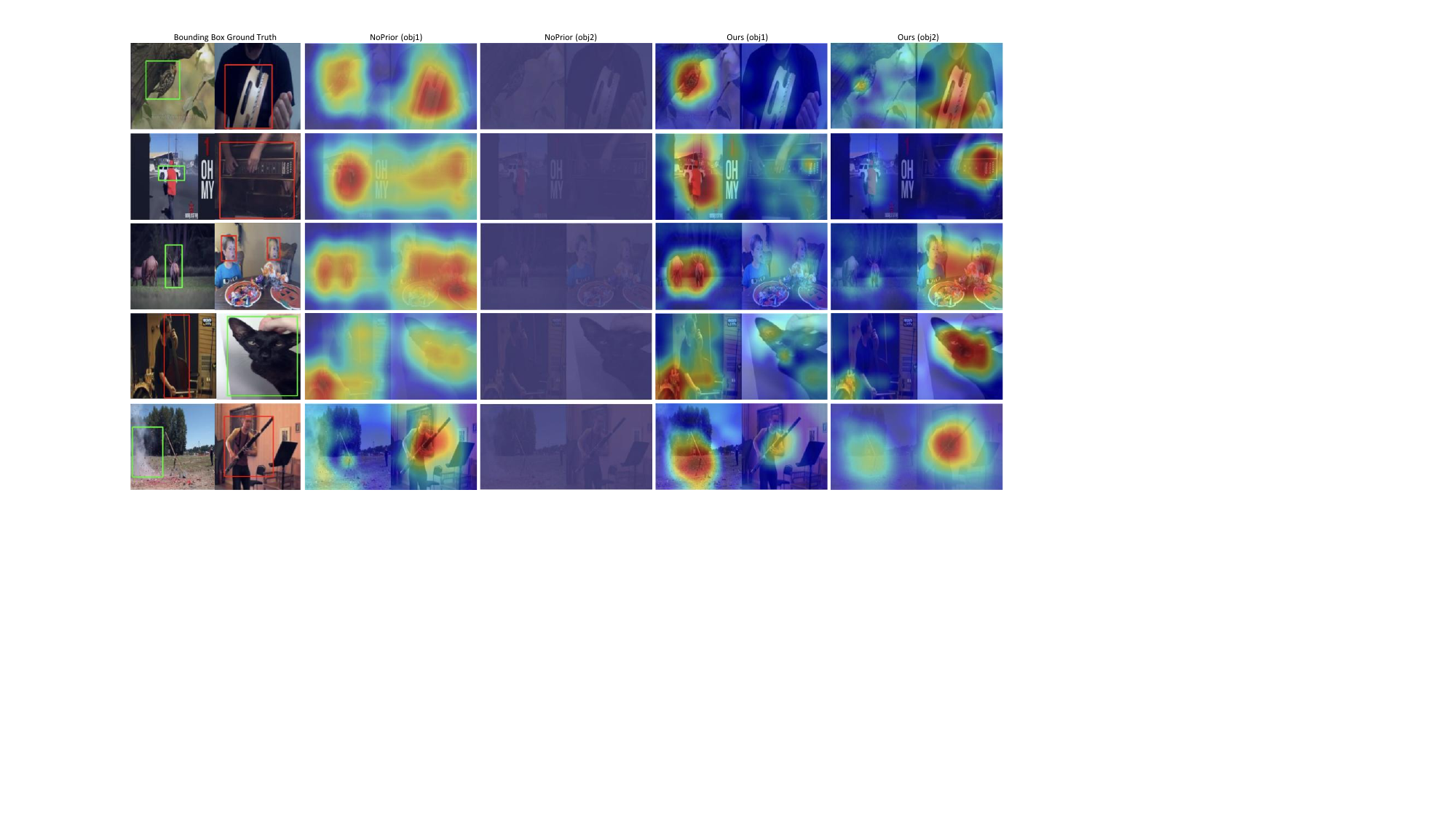}
    \caption{\textbf{Qualitative results of the reproduced NoPrior model on VGGSound-Duet.} The first column shows the input frame with ground truth annotations for both sound sources. The remaining columns display the predicted localisation heatmaps for each detected source.}
    \label{fig:noprior_viz_vggduet}
\end{figure*}

\subsection{Reproduction on MUSIC-Duet}
\label{subsec:noprior_music}

\paragraph{\textbf{Code modifications.}}
We base our reproduction on the official code release of Kim~\etal~\cite{kim2024learning}.
The dual-source setting (\texttt{concat\_num=2}) requires one modification to the frame concatenation in \texttt{datasets\_flow.py}.
The released code uses \texttt{torch.cat([...], dim=1)}, which concatenates two frames along the height dimension and produces a $3{\times}448{\times}224$ tensor.
However, VGGSound-Duet frames are laid out with the two sources side by side: both the original benchmark visualisations~\cite{mo2023audiovisual} and the figures in Kim~\etal~\cite{kim2024learning} show a horizontal $H{\times}2W$ layout.
We therefore change the concatenation to \texttt{dim=2} for both image and optical flow tensors, yielding a $3{\times}224{\times}448$ tensor consistent with this layout.
The model architecture, loss functions, and training logic are left entirely unchanged.

\paragraph{\textbf{Results.}}
The reproduced results under the frame-wise protocol are reported in \cref{tab:two_bbox_datasets}.
\cref{fig:noprior_viz_music} shows representative predictions on MUSIC-Duet, where similar behaviour to VGGSound-Duet is observed.

\begin{figure*}
    \centering
    \includegraphics[width=\textwidth]{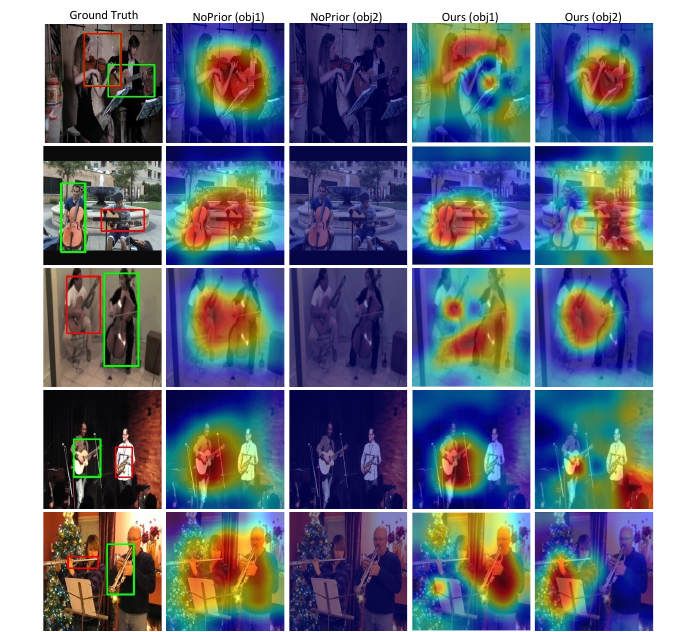}
    \caption{\textbf{Qualitative results of the reproduced NoPrior model on MUSIC-Duet.} The first column shows the input frame with ground truth annotations for both sound sources. The remaining columns display the predicted localisation heatmaps for each detected source.}
    \label{fig:noprior_viz_music}
\end{figure*}

\clearpage
%
%
\bibliographystyle{splncs04}
\bibliography{main}

\end{document}